\DeclareMathOperator{\landau}{\mathcal{O}}
\DeclareMathOperator{\npclass}{\mathsf{NP}}
\DeclareMathOperator{\ntime}{\mathsf{NTIME}}
\DeclareMathOperator{\pclass}{\mathsf{P}}
\DeclareMathOperator{\pspaceclass}{\mathsf{PSPACE}}
\DeclareMathOperator{\var}{\mathsf{var}}
\DeclareMathOperator{\rep}{\textsc{REP}}
\DeclareMathOperator{\3par}{\textsc{3-PAR}} 
\newcommand{\ta}{\ensuremath{\mathtt{a}}}
\newcommand{\tb}{\ensuremath{\mathtt{b}}}
\newcommand{\tc}{\ensuremath{\mathtt{c}}}
\theoremstyle{plain}
\newtheorem{proposition}{Proposition}
\newtheorem{claim}{Claim}
\newtheorem{problem}{Problem}
\title{The Hardness of Solving Simple Word Equations}
\titlerunning{Simple Word Equations} 
\author[1]{Joel D. Day}
\author[1]{Florin Manea}
\author[1]{Dirk Nowotka}
\affil[1]{Kiel University, Department of Computer Science, D-24098, Kiel, Germany\\
  \texttt{\{jda,flm,dn\}@informatik.uni-kiel.de}}
\authorrunning{J.\,D. Day, F. Manea, and D. Nowotka} 
\subjclass{F.2.2, F.4.3}
\keywords{Word Equations, Regular Patterns, $\npclass$-completeness}
\begin{document}

\maketitle

\begin{abstract}
We investigate the class of regular-ordered word equations. In such equations, each variable occurs at most once in each side and the order of the variables occurring in both sides is the preserved (the variables can be, however, separated by potentially distinct constant factors). Surprisingly, we obtain that  solving such simple equations, even when the sides contain exactly the same variables, is~$\npclass$-hard. By considerations regarding the combinatorial structure of the minimal solutions of the more general quadratic equations we obtain that the satisfiability problem for regular-ordered equations is in~$\npclass$. Finally, we also show that a related class of simple word equations, that generalises one-variable equations, is in $\pclass$.
\end{abstract}
\section{Introduction}\label{sec:intro}

A {\em word equation} is an equality $\alpha = \beta$, where $\alpha$ and $\beta$ are words over an alphabet $\Sigma \cup X$ (called the left, respectively, right side of the equation); $\Sigma=\{\ta,\tb,\tc,\ldots\}$ is the alphabet of \emph{constants} and $X = \{x_1, x_2, x_3, \ldots\}$ is the alphabet set of \emph{variables}. A \emph{solution} to the equation $\alpha = \beta$ is a morphism $h : (\Sigma \cup X)^* \to \Sigma^*$ that acts as the identity on $\Sigma$ and satisfies $h(\alpha) = h(\beta)$. For instance, $\alpha =x_1 \ta \tb x_2$ and $\beta = \ta x_1 x_2 \tb$ define the equation $x_1 \ta \tb x_2 = \ta x_1 x_2 \tb$,
whose solutions are the morphisms $h$ with $h(x_1) = \ta^k$, for $k\geq 0$, and $h(x_2) = \tb^\ell$, for $\ell \geq 0$. 

The study of word equations (or the existential theory of equations over free monoids) is an important topic found at the intersection of algebra and computer science, with significant connections to, e.g., combinatorial group or monoid theory~\cite{Lyndon77,Lyndon60,DiekertW16}, unification~\cite{sch:wor,Jaffar90,Jez14}), and, more recently, data base theory \cite{FreydenbergerH16,Freydenberger17}. 
The problem of deciding whether a given word equation $\alpha=\beta$ has a solution or not, known as the satisfiability problem, was shown to be decidable by Makanin~\cite{mak:the} (see Chapter~$12$ of \cite{lot:alg} for a survey). Later it was shown that the satisfiability problem is in $\pspaceclass$ by Plandowski~\cite{Pla2006}; a new proof of this result was obtained in~\cite{Jez2016b}, based on a new simple technique called recompression. However, it is conjectured that  the satisfiability problem is in $\npclass$; this would match the known lower bounds: the satisfiability of word equations is $\npclass$-hard, as it follows immediately from, e.g., \cite{ehr:fin}. This hardness result holds in fact for much simpler classes of word equations, like the quadratic equations (where the number of occurrences of each variable in $\alpha\beta$ is at most two), as shown in \cite{DieRob99}. There are also cases when the satisfiability problem is tractable. For instance, word equations with only one variable can be solved in linear time in the size of the equation, see~\cite{Jez2016}; equations with two variables can be solved in time $\landau(|\alpha\beta|^5)$, see~\cite{DabPla2004}.

In general, the $\npclass$-hardness of the satisfiability problem for classes of word equations was shown as following from the $\npclass$-completeness of the \emph{matching problem} for corresponding classes of patterns with variables. In the matching problem we essentially have to decide whether an equation $\alpha=\beta$, with $\alpha \in (\Sigma\cup X)^*$ and $\beta\in \Sigma^*$, has a solution; that is, only one side of the equation, called pattern, contains variables. The aforementioned results \cite{ehr:fin,DieRob99} show, in fact, that the matching problem is $\npclass$-complete for general $\alpha$, respectively when $\alpha$ is quadratic. Many more tractability and intractability results concerning the matching problem are known (see \cite{rei:patIaC, fer:patjour, FerSchVil2015}). In~\cite{FeMaMeSc14_stacs}, efficient algorithms were defined for, among others, patterns which are \emph{regular} (each variable has at most one occurrence), \emph{non-cross} (between any two occurrences of a variable, no other distinct variable occurs), or patterns with only a constant number of variables occurring more than once.

Naturally, for a class of patterns that can be matched efficiently, the hardness of the satisfiability problem for word equations with sides in the respective class is no longer immediate. A study of such word equations was initiated in \cite{ManSchNow16}, where the following results were obtained. Firstly, the satisfiability problem for word equations with non-cross sides (for short non-cross equations) remains $\npclass$-hard. In particular, solving non-cross equations $\alpha=\beta$ where each variable occurs at most three times, at most twice in $\alpha$ and exactly once in $\beta$, is $\npclass$-hard. Secondly, the satisfiability of one-repeated variable equations (where only one variable occurs more than once in $\alpha\beta$, but an arbitrary number of other variables occur only once) having at least one non-repeated variable on each side, was shown to be in~$\pclass$.

In this paper we mainly address the class of regular-ordered equations, whose sides are regular patterns and, moreover, the order of the variables occurring in both sides is the same. This seems to be one of the structurally simplest classes of equations whose number of variables is not bounded by a constant. Our central motivation in studying this kind of equations with a simple structure is that understanding their complexity and combinatorial properties may help us define a boundary between classes of word equations whose satisfiability is tractable and intractable, as well as to gain a better understanding of the core reasons why solving word equations is hard.
In the following, we overview our results, methods, and their connection to existing works from the literature.

\vspace*{0.1cm}

\noindent {\bf Lower bounds.} Our first result closes the main open problem from \cite{ManSchNow16}. Namely, we show that it is still $\npclass$-hard to solve regular (ordered) word equations. Note that in these word equations each variable occurs at most twice: once in every side. They are particular cases of both quadratic equations and non-cross equations, so the reductions showing the hardness of solving these more general equations do not carry over. To begin with, matching quadratic patterns is $\npclass$-hard, while matching regular patterns can be done in linear time. Showing the hardness of the matching problem for quadratic patterns in \cite{DieRob99} relied on a simple reduction from $3$-SAT: one occurrence of each variable of the word equation was used to simulate an assignment of a corresponding variable in the $3$-SAT formula, then the second occurrence was used to ensure that this assignment satisfies the formula. To facilitate this final part, the second occurrences of the variables were grouped together, so the equation constructed in this reduction was (clearly) not non-cross. Indeed, matching non-cross patterns can be done in polynomial time. So showing that solving non-cross equations is hard, in \cite{ManSchNow16}, required slightly different techniques. This time, the reduction was from an assignment problem in graphs. The (single) occurrences of the variables in one side of the equation were used to simulate an assignment in the graph, while the (two) occurrences of the variables from the other side were used for two reasons: to ensure that the previously mentioned assignment is correctly constructed and to ensure that it also satisfies the requirements of the problem. For the second part it was also useful to allow the variables to occur in one side in a different order than their order from the other side. 

As stated in \cite{ManSchNow16}, showing that the satisfiability problem for regular equations seems to require a totally different approach. Our hardness reduction relies on some novel ideas, and, unlike the aforementioned proofs, has a deep word-combinatorics core. As a first step, we define a reachability problem for a certain type of (regulated) string rewriting systems, and show it is $\npclass$-complete (in Lemma \ref{rep_is_hard}). This is achieved via a reduction from the strongly $\npclass$-complete problem \textsc{3-Partition}~\cite{gar:com}. Then we show that this reachability problem can be reduced to the satisfiability of regular-ordered word equations; in this reduction (described in the successive Lemmas \ref{lem:sol3}, \ref{lem:solutions}, and \ref{lem:sol2}), we essentially try to encode the applications of the rewriting rules of the system into the periods of the words assigned to the variables in a solution to the equation. In doing this, we are able to only use one occurrence of each variable per side, and moreover to even have the variables in the same order in both sides. This overcomes the two main restrictions of the previous proofs: the need of having two occurrences of some variables on one side and the need to have a different order of the variables in the two sides of the equation, or, respectively, to interleave the different occurrences of different variables.  

As a concluding remark, our reduction suggests the ability of this very simple class of equations to model other natural problems in rewriting, combinatorics on words, and even beyond. In this respect, our construction is also interesting from the point of view of the expressibility of word equations, such as studied in~\cite{kar:the}. 

\vspace*{0.1cm}

\noindent {\bf Upper bounds.} A consequence of the results in \cite{PlandowskiR98} is that the satisfiability problem for a certain class of word equations is in $\npclass$ if the length of the minimal solutions of such equations (where the length of the solution defined by a morphism $h$ is the image of the equation's sides under~$h$) are at most exponential. With this in mind, we show Lemma~\ref{lem:squares1}, which gives us an insight in the combinatorial structure of the minimal solutions of quadratic equations. Further, in Proposition~\ref{prop:linearsolutions}, we give a concise proof of the fact the image of any variable in a minimal solution to a regular-ordered equation is at most linear in the size of the equations (so the size of the minimal solutions is quadratic). It immediately follows that the satisfiability problem for regular-ordered equations is in $\npclass$. It is an open problem to show the same for arbitrary regular or quadratic equations, and hopefully the lemma we propose here might help in that direction. Also, it is worth noting that our polynomial upper bound on length of minimal solutions of regular-ordered equations is, in a sense, optimal. More precisely, non-cross equations $\alpha=\beta$ where the order of the variables is the same in both sides and each variable occurs exactly three times in $\alpha\beta$, but never only on one side, may already have exponentially long minimal solutions (see Proposition~\ref{prop:exp-length}). To this end, it seems even more surprising that it is $\npclass$-hard to solve equations with such a simple structure (regular-ordered), which, moreover, have quadratically short solutions.

In the rest of the paper we deal with a class of word equations whose satisfiability is tractable. To this end, we use again a reasoning on the structure of the minimal solutions of equations, similar to the above, to show that if we preserve the non-cross structure of the sides of the considered word equations, but allow only one variable to occur an arbitrary number of times, while all the others occur exactly once in both sides, we get a class of equations whose satisfiability problem is in $\pclass$. This problem is related to the one-repeated variable equations considered in \cite{ManSchNow16}; in this case, we restrict the equations to a non-cross structure of the sides, but drop the condition that at least one non-repeated variable should occur on each side. Moreover, this problem generalises the one-variable equations~\cite{Jez2016}, while preserving the tractability of their satisfiability problem. Last, but not least, this result shows that the pattern searching problem, in which, given a pattern $\alpha\in (\Sigma\cup\{x_1\})^*$ containing constants and exactly one variable $x_1$ (occurring several times)  and a text $\beta \in (\Sigma\cup\{x_1\})^*$ containing constants and the same single (repeated) variable, we check whether there exists an assignment of $x_1$ that makes $\alpha$ a factor of $\beta$, is tractable; indeed, this problem is the same as checking whether the word equation $x_2\alpha x_3=\beta, $ with $\alpha,\beta\in (\Sigma\cup\{x_1\})^*$, is satisfiable.

Due to space constraints, some proofs are given in the Appendix.

\section{Preliminaries}\label{sec:definitions}
Let $\Sigma$ be an alphabet. We denote by $\Sigma^*$ the set of all words over $\Sigma$; by $\varepsilon$ we denote the empty word. Let $|w|$ denote the length of a word $w$. For $1\leq i\leq j\leq |w|$ we denote by $w[i]$ the letter on the $i^{th}$ position of $w$ and $w[i..j]=w[i]w[i+1]\cdots w[j]$. A word $w$ is $p$-periodic for $p\in \mathbb{N}$ (and $p$ is called a period of~$w$) if $w[i]=w[i+p]$ for all $1\leq i\leq |w|-p$; the smallest period of a word is called its period. Let $w=xyz$ for some words $x,y,z\in \Sigma^*$, then $x$ is called prefix of $w$, $y$ is a factor of $w$, and $z$ is a suffix of $w$. Two words $w$ and $u$ are called conjugate if there exist non-empty words $x,y$ such that $w=xy$ and $u=yx$. 

Let $\Sigma=\{\ta,\tb,\tc,\ldots\}$ be an alphabet of \emph{constants} and let $X = \{x_1, x_2, x_3, \ldots\}$ be an alphabet of \emph{variables}. A word $\alpha \in (\Sigma \cup X)^*$ is usually called {\em pattern}. For a pattern $\alpha$ and a letter $z \in \Sigma \cup X$, let $|\alpha|_z$ denote the number of occurrences of $z$ in~$\alpha$; $\var(\alpha)$ denotes the set of variables from $X$ occurring in $\alpha$. A morphism $h : (\Sigma \cup X)^* \to \Sigma^*$ with $h(a) = a$ for every $a \in \Sigma$ is called a \emph{substitution}. We say that  $\alpha \in (\Sigma \cup X)^*$ is \emph{regular} if, for every $x \in \var(\alpha)$, we have $|\alpha|_{x} = 1$; e.\,g., $\ta x_1 \ta x_2 \tc x_3 x_4 \tb$ is regular. Note that $L(\alpha) = \{h(\alpha) \mid h \text{ is a substitution}\}$ (the pattern language of $\alpha$) is regular when $\alpha$ is regular, hence the name of such patterns. The pattern $\alpha$ is \emph{non-cross} if between any two occurrences of the same variable $x$ no other variable different from $x$ occurs, e.\,g., $\ta x_1 \tb \ta x_1 x_2 \ta x_2 x_2  \tb$ is non-cross, but $x_1 \tb x_2 x_2 \tb x_1$ is not. 

A \emph{word equation} is a tuple $(\alpha, \beta) \in (\Sigma \cup X)^+ \times (\Sigma \cup X)^+$; we usually denote such an equation by $\alpha = \beta$, where $\alpha$ is the left hand side (LHS, for short) and $\beta$ the right hand side (RHS) of the equation. A \emph{solution} to an equation $\alpha= \beta$ is a substitution $h$ with $h(\alpha) = h(\beta)$, and $h(\alpha)$ is called the \emph{solution word} (\emph{defined by $h$}); the length of a solution $h$ of the equation $\alpha=\beta$ is $|h(\alpha)|$. A solution of shortest length to an equation is also called minimal. 

A word equation is \emph{satisfiable} if it has a solution and the \emph{satisfiability problem} is to decide for a given word equation whether or not it is satisfiable.  
The satisfiability problem for general word equations is in $\ntime(n\log N)$, where $n$ is the length of the equation and $N$ the length of its minimal solution \cite{PlandowskiR98}. The~next~result~follows.

\begin{lemma}\label{lem:shortsolutions}
Let $\mathcal{E}$ be a class of word equations. Suppose there exists a polynomial $P$ such that such that for any equation in $\mathcal{E}$ its minimal solution, if it exists, has length at most $2^{P(n)}$ where $n$ is the length of the equation. Then the satisfiability problem for $\mathcal{E}$~is~in~$\npclass$.
\end{lemma}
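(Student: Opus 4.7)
The plan is to directly invoke the Plandowski--Rytter result cited immediately before the lemma, namely that the satisfiability problem for arbitrary word equations can be decided in $\ntime(n \log N)$, where $n$ is the length of the input equation and $N$ is the length of its minimal solution. This already takes care of all the algorithmic content; the only work that remains is a straightforward quantitative substitution of the hypothesised bound on $N$.

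More concretely, I would first observe that the Plandowski--Rytter procedure applies uniformly to every word equation, so in particular to every equation drawn from the subclass $\mathcal{E}$; restricting to $\mathcal{E}$ does not invalidate the algorithm but only improves the available bound on $N$. Then I would substitute the hypothesis: if $\alpha = \beta \in \mathcal{E}$ has length $n$ and is satisfiable, then it admits a solution $h$ with $|h(\alpha)| \leq 2^{P(n)}$, so $\log N \leq P(n)$. Consequently the nondeterministic running time is bounded by $n \log N \leq n \cdot P(n)$, which is polynomial in $n$. This shows the satisfiability problem for $\mathcal{E}$ lies in $\npclass$.

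There is essentially no obstacle here beyond making the reduction explicit: the lemma is a direct corollary of the already-cited result together with the hypothesised exponential bound. The only point worth being slightly careful about is that the bound $2^{P(n)}$ is assumed on the length of the minimal solution, so that $\log N$ is genuinely at most $P(n)$; if one instead only had a bound of the same form on, say, the lengths of the images $h(x_i)$ of individual variables, one would have to multiply by the number of variables, but even that would still yield a polynomial, so the conclusion is robust.
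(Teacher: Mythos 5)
Your proposal is correct and follows exactly the paper's intended argument: the paper derives this lemma directly from the cited $\ntime(n\log N)$ result of Plandowski and Rytter, and your substitution $\log N \leq P(n)$ giving nondeterministic time $n\cdot P(n)$ is precisely that reasoning. Nothing is missing.
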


A word equation $\alpha= \beta$ is regular or non-cross, if both $\alpha$ and $\beta$ are regular or both $\alpha$ and $\beta$ are non-cross, respectively; $\alpha=\beta$ is {\em quadratic} if each variable occurs at most twice in $\alpha\beta$. We call a regular or non-cross equation {\em ordered} if the order in which the variables occur in both sides of the equation is the same; that is, if $x$ and $y$ are variables occurring both in $\alpha$ and $\beta$, then all occurrences of $x$ occur before all occurrences of $y$ in $\alpha$ if and only if all occurrences of $x$ occur before all occurrences of $y$ in $\beta$. For instance $x_1x_1\ta x_2x_3 \tb =x_1\ta x_1x_2 \tb x_3$ is ordered non-cross but $x_1x_1\ta x_3x_2 \tb =x_1\ta x_1x_2 \tb x_3$ is still non-cross but not ordered. 

We continue with an example of very simple word equations whose minimal solution has exponential length, whose structure follows the one in \cite[Theorem 4.8]{KoscielskiP96}. 

\begin{proposition}\label{prop:exp-length} The minimal solution to the word equation $x_n \ta x_n \tb x_{n-1} \tb x_{n-2} \cdots  \tb x_1  = \ta x_n x_{n-1}^2 \tb x_{n-2}^2 \tb ...\tb x_1^2 \tb \ta^2$ has length $\Theta(2^n)$.
\end{proposition}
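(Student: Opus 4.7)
The plan is to prove matching upper and lower bounds of $\Theta(2^n)$ on the length of the minimal solution; the lower bound is the delicate direction.

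\textbf{Upper bound.} I define $h(x_i) := \ta^{2^i}$ for $i \in \{1, \ldots, n\}$ and verify by direct substitution that this is a solution: both sides reduce to $\ta^{2^{n+1}+1}\,\tb\,\ta^{2^{n-1}}\,\tb\,\ta^{2^{n-2}}\,\tb\,\cdots\,\tb\,\ta^{2}$. Its length is $2\cdot 2^n + (n-1) + \sum_{i=1}^{n-1}2^i + 1 = 3\cdot 2^n + n - 2$.

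\textbf{Lower bound.} Let $h$ be any solution and set $\ell_i := |h(x_i)|$. I would show that $h(x_i) = \ta^{\ell_i}$ with $\ell_i = 2^i$, so in fact the solution is unique. Comparing the length-$(\ell_n+1)$ prefixes of both sides yields $h(x_n)\,\ta = \ta\,h(x_n)$, and the standard commutation lemma forces $h(x_n) = \ta^{\ell_n}$. Substituting this and cancelling the common $\ta^{\ell_n+1}$ prefix from both sides reduces the equation to
\[
\ta^{\ell_n}\,\tb\,h(x_{n-1})\,\tb\,h(x_{n-2})\,\tb\cdots\tb\,h(x_1) \;=\; h(x_{n-1})^2\,\tb\,h(x_{n-2})^2\,\tb\cdots\tb\,h(x_1)^2\,\tb\,\ta^2.
\]
Now $h(x_{n-1})^2$ is a prefix of the left-hand side $\ta^{\ell_n}\tb\cdots$; it must lie entirely inside the initial $\ta^{\ell_n}$ block, since otherwise $h(x_{n-1})$ would have to start with $\ta^{\ell_n}\tb$ (in particular $\ell_{n-1} > \ell_n$), whereas a straightforward length count on the reduced equation gives $\ell_n = 2 + \sum_{i=1}^{n-1}\ell_i \ge \ell_{n-1} + 2$, a contradiction. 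Hence $h(x_{n-1}) = \ta^{\ell_{n-1}}$; and aligning the $\tb$ immediately after $h(x_{n-1})^2$ on the right with the corresponding position on the left forces $2\ell_{n-1} = \ell_n$. Iterating the same argument peels off $h(x_{n-2}), \ldots, h(x_2)$, each being a power of $\ta$ of exactly half the length of its predecessor. Finally, the suffix-equation $\ta^{\ell_2}\tb\,h(x_1) = h(x_1)^2\,\tb\,\ta^2$ forces $h(x_1) = \ta^2$, so $\ell_1 = 2$ and hence $\ell_i = 2^i$. Therefore $|h(\alpha)| = 3\cdot 2^n + n - 2 = \Omega(2^n)$.

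\textbf{Main obstacle.} The subtle step is confirming at each level of the induction that the current variable's image must be a power of $\ta$, ruling out the possibility that a $\tb$ is ``hidden'' inside it. The inductive hypothesis is crucial: once the outer variables' images are known to be in $\ta^*$, the reduced equation has a sufficiently clean form for a short length/content argument to exclude the remaining case and to pin down the length of the next variable via the alignment of the next constant $\tb$.
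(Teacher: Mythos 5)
Your proof is correct and follows essentially the same route as the paper: both show the equation has a unique solution $h(x_i)=\ta^{2^i}$ by first deriving $h(x_n)\in\ta^*$ from the commutation $h(x_n)\ta=\ta h(x_n)$ and then forcing every remaining variable to be a power of $\ta$ of half the preceding length (the paper rules out $\tb$'s in the images by a global count of $\tb$-occurrences, while you do it inductively via prefix alignment and a length count -- the same idea). One cosmetic remark: in the case $2\ell_{n-1}>\ell_n$ you should also dispose of the subcase $\ell_{n-1}\le\ell_n$, where $h(x_{n-1})=\ta^{\ell_{n-1}}$ and the second copy of it collides with the $\tb$ at position $\ell_n+1$, which is immediately contradictory rather than giving ``$h(x_{n-1})$ starts with $\ta^{\ell_n}\tb$''.
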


Finally, we recall the \textsc{$3$-Partition} problem (see \cite{gar:com}).
This problem is $\npclass$-complete in the strong sense, i.e., it remains $\npclass$-hard even when the input numbers are given in unary. 
\begin{problem}[\textsc{3-Partition} -- $\3par$]$ $\\
\emph{Instance:} $3m$ nonnegative integers (given in unary) $A=(k_1,\ldots,k_{3m})$, whose sum is $ms$\\
\emph{Question:} Is there a partition of $A$ into $m$ disjoint groups of three elements, such that each group sums exactly to $s$.
\end{problem}

\section{Lower bounds}

In this section, we show that the highly restricted class of regular-ordered word equations is $\npclass$-hard, and, thus, that even when the order in which the variables occur in an equation is fixed, and each variable may only repeat once -- and never on the same side of the equation -- satisfiability remains intractable. As mentioned in the introduction, our result shows the intractability of the satisfiability problem for a class of equations considerably simpler than the simplest intractable classes of equations known so far. Our result seems also particularly interesting since we are able to provide a corresponding upper bound in the next section, and even show that the minimal solutions of regular-ordered equations are ``optimally short''.

\begin{theorem}\label{sat_is_hard}
The satisfiability problem for regular-ordered word equations is $\npclass$-hard.
\end{theorem}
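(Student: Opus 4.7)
My plan is to follow the roadmap announced in the introduction: reduce from the strongly $\npclass$-complete $\3par$ problem (in which the numbers are given in unary) to satisfiability of regular-ordered equations, but via an intermediate reachability problem $\rep$ for a suitable regulated string rewriting system. The advantage of the intermediate step is that $\rep$-instances are already ``linear'' objects (sequences of rewriting steps acting locally on a string), which are much closer in spirit to a word equation than the combinatorial object of a 3-partition. Concretely, I would design a rewriting system over a small alphabet whose configurations represent the current state of the partitioning process, encode each input number $k_i$ as a block of $k_i$ symbols so that the unary input matches the polynomial size bound, and choose rewriting rules that model ``select three remaining blocks whose total length is $s$ and mark them as forming a new group''. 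Reachability of a designated terminal configuration then coincides with the existence of a 3-partition, giving $\npclass$-hardness of $\rep$ (this would be Lemma~\ref{rep_is_hard}).

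The main obstacle is the second reduction, from $\rep$ to satisfiability of regular-ordered equations. The natural shape of such an equation is
\[
u_0 x_1 u_1 x_2 u_2 \cdots x_N u_N \;=\; v_0 x_1 v_1 x_2 v_2 \cdots x_N v_N
\]
with $u_i, v_i \in \Sigma^*$, and the only freedom the reduction has is in the choice of the constant ``separators''. The key word-combinatorial observation I would use is that in the solution word $W = h(\alpha) = h(\beta)$, the image $h(x_i)$ appears twice, at positions whose difference $d_i$ is determined by the cumulative length imbalance of the $u_j, v_j$ and of the preceding $h(x_j)$'s; whenever the two occurrences overlap, $h(x_i)$ must be $d_i$-periodic, so its period encodes a piece of information that can be carried from one variable to the next. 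I would use this to let each $x_i$ record one intermediate configuration of the $\rep$-run, and pick the $u_i, v_i$ so that the transition between the period of $h(x_i)$ and that of $h(x_{i+1})$ is possible exactly when a rewriting rule of the system justifies the corresponding step. This mirrors the three technical lemmas (\ref{lem:sol3}, \ref{lem:solutions}, \ref{lem:sol2}) announced in the introduction, which presumably correspond to: (i) a structural description of any solution, (ii) showing that every solution induces a valid $\rep$-run, and (iii) conversely constructing a solution out of any $\rep$-run.

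The part I expect to be genuinely delicate is ruling out spurious solutions, i.e.\ solutions in which the periods of the $h(x_i)$ do not correspond to any legal sequence of rewriting steps. Here one needs sharp periodicity arguments in the style of Fine and Wilf: whenever two periods meet on a long enough overlap, they must be compatible, so the constants $u_i, v_i$ must be chosen long and structured enough (for instance using carefully chosen distinguishing factors over a slightly enlarged alphabet, possibly with marker letters between ``payload'' regions) that every candidate period of $h(x_i)$ is forced into a canonical form legible as a configuration of the rewriting system. Once this rigidity is in place, both directions of the equivalence become routine: given a run of the rewriting system, explicitly define $h(x_i)$ as an appropriate periodic word of the right length to witness satisfiability; conversely, decode the periods in any solution into a run. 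A final check shows that the size of the constructed equation is polynomial in the original $\3par$ (equivalently $\rep$) instance, yielding the claimed $\npclass$-hardness.
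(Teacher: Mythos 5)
Your overall route coincides with the paper's: reduce 3-Partition to the programmed-rewriting reachability problem $\rep$ (Lemma~\ref{rep_is_hard}), then encode the $n$ ordered rule applications into a regular-ordered equation with one variable per step, letting the period forced by the overlap of the two occurrences of $h(x_i)$ carry the $i$-th sentential form. But the crux of the paper's reduction is exactly the point your sketch leaves open: in an equation of the form $u_0 x_1 u_1 \cdots x_N u_N = v_0 x_1 v_1 \cdots x_N v_N$ nothing prevents solutions in which the two occurrences of some $h(x_i)$ do not overlap at all; then there is no periodicity constraint whatsoever, Fine--Wilf-type arguments have nothing to act on, and ``long, structured separators'' cannot be freely installed because the separators are essentially dictated by the $\rep$ instance ($w_i$, $w^\prime_i$, $\# u_{start}$, $\# u_{end}$). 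The paper does not try to make $\alpha_\mu = \beta_\mu$ itself rigid. Instead it isolates the notion of an \emph{overlapping} solution, proves that $\alpha_\mu=\beta_\mu$ has an overlapping solution iff $\mu$ satisfies $\rep$ (Lemmas~\ref{lem:solutions} and~\ref{lem:sol2}), and separately builds a companion regular-ordered equation (Lemma~\ref{lem:sol3}) in which each $x_i$ is replaced by three variables separated by two occurrences of the fresh marker $\#$; a counting argument on $\#$-occurrences (each side gains one $\#$ per block, and $w_i,w^\prime_i$ are $\#$-free) forces the alignment that guarantees overlap, and conversely any overlapping solution can be cut at two $\#$'s to populate the split variables. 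This companion construction, or an equally concrete substitute for handling non-overlapping solutions, is the missing idea in your proposal; without it the claimed equivalence ``satisfiable iff $\mu$ satisfies $\rep$'' is not established.

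A second, smaller discrepancy: once overlap is secured, the part you expect to be delicate largely evaporates and needs no Fine--Wilf rigidity. Since each period $v_i$ contains exactly one $\#$, it has a unique conjugate beginning with $\#$; the overlap makes $y_i w_i$ conjugate to $v_i$ while $v_{i+1} = y_i w^\prime_i$, so the canonical conjugate of $v_{i+1}$ is obtained from that of $v_i$ by replacing one occurrence of $w_i$ with $w^\prime_i$ --- a short conjugacy argument (Claims~\ref{sol2:claim1} and~\ref{sol2:claim2}), not a period-clash analysis. Finally, make the ``programmed'' nature of the intermediate problem explicit: the rules are applied once each in a fixed order, only the position of application being nondeterministic. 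This is what lets each rule correspond to a single variable pair and keeps the equation regular-ordered; the paper's reduction from 3-Partition ($u_{start}=\tb^m$, rules $\tb \to \ta^{k_i}\tb\tc$, $u_{end}=(\ta^s\tb\tc^3)^m$) is tailored to that format, whereas your sketched system (``select three remaining blocks summing to $s$ and mark them'') is not obviously expressible as a predetermined sequence of substitution rules.
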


In order to show $\npclass$-hardness, we shall provide a reduction from a reachability problem for a simple type of regulated string-rewriting system. Essentially, given two words -- a starting point, and a target -- and an ordered series of $n$ rewriting rules (a rewriting program, in a sense), the problem asks whether this series of rules may be applied consecutively (in the predefined order) to the starting word such that the result matches the target. We stress that the order of the rules is predefined, but the place where a rule is to be applied within the sentential form is non-deterministically chosen. 
%

%

\begin{problem}[\textsc{Rewriting with Programmed Rules} -- $\rep$]$ $\\
\emph{Instance:} Words $u_{start},u_{end}\in \Sigma^*$ and an ordered series of $n$ substitution rules $w_i \to {w^\prime_i}$, with $w_i,w'_i\in \Sigma^*,$ for $1\leq i\leq n$.\\
\emph{Question:} Can $u_{end}$ be obtained from $u_{start}$ by applying each rule (i.e., replacing an occurrence of $w_i$ with ${w^\prime_i}$), in order, to $u_{start}$.
\end{problem}

\begin{example}
Let $u_{start}=b^5$ and $u_{end}=(a^{11}bc^2)^5$; for $1\leq i\leq 10$, consider the rules $w_i\to w'_i$ with $w_i=b $ and $w'_i=a^ibc$. We can obtain $u_{end}$ from $u_{start}$ by first applying $w_1\to w'_1$ to the first $b$, then $w_2\to w'_2$ to the second $b$, and further, in order for $3\leq i\leq 5$, by applying $w_i\to w'_i$ to the $i^{th}$ $b$. Then, we apply $w_6$ to the fifth $b$ (counting from left to right). Further we apply in order, for $7\leq i\leq 10$, $w_i\to w'_i$ to the $(11-i)^{th}$ occurrence of $b$.
\end{example}

It is not so hard to see that $\rep$ is $\npclass$-complete (the size of the input is the sum of the lengths of $u_{start},u_{end}, w_i$ and ${w^\prime_i}$). A reduction can be given from $\3par$, in a manner similar to the construction in the example above; important to our proof, $\3par$ is strongly $\npclass$-complete, so it is simpler to reduce it to a problem whose input consists~of~words.

\begin{lemma}\label{rep_is_hard}
$\rep$ is $\npclass$-complete.
\end{lemma}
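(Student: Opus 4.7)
The plan is to establish membership in $\npclass$ by a standard guess-and-verify argument, and to prove hardness by reducing from $\3par$; using the \emph{strong} $\npclass$-completeness of $\3par$ is crucial, since my reduction will produce words whose sizes are polynomial in $\sum_i k_i$, which is only polynomial in the input size when the $k_i$ are encoded in unary. For membership I would nondeterministically guess, for each rule index $i$, the position in the current sentential form where $w_i \to w'_i$ is applied; every intermediate string has length bounded by $|u_{start}| + \sum_{i=1}^n |w'_i|$, so the full derivation is verifiable in polynomial time.

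For the reduction, given an instance $A = (k_1, \ldots, k_{3m})$ of $\3par$ with $\sum_i k_i = ms$, I would build the $\rep$ instance over $\{\ta,\tb,\td\}$ with $u_{start} = \tb^m$ (each $\tb$ representing an initially empty ``slot''), $u_{end} = (\ta^s \td)^m$, and the $4m$ ordered rules $\tb \to \ta^{k_i}\tb$ for $i = 1,\ldots,3m$ (``fill'' rules), followed by $m$ copies of $\tb \to \td$ (``close'' rules). Since $A$ is given in unary, this construction has polynomial size.

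Correctness rests on the structural invariant that after any sequence of fill rules the sentential form has the shape $\ta^{c_1}\tb\ta^{c_2}\tb\cdots\ta^{c_m}\tb$: the count of $\tb$'s is preserved, and applying a fill rule to the $j$-th $\tb$ just increases $c_j$ by the corresponding $k_i$. After all $3m$ fill rules, $c_j = s_j$, the total weight placed into slot $j$, with $\sum_j s_j = ms$. The $m$ close rules must then convert each remaining $\tb$ into $\td$, yielding $\ta^{s_1}\td\cdots\ta^{s_m}\td$, which matches $u_{end}$ iff $s_j = s$ for every $j$; combined with the range constraint $s/4 < k_i < s/2$ implicit in the strongly $\npclass$-complete form of $\3par$, this forces exactly three weights per slot and hence corresponds to a valid $3$-partition. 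The converse direction is immediate: any partition prescribes to which slot each fill rule should be applied. The step I expect to require the most care is ruling out ``cheating'' derivations, i.e., showing the slot abstraction is faithful; this is handled by the $\tb$-count invariance during the fill phase together with the fixed order of rules, which forbids closing a slot before it has received all of its contributions.
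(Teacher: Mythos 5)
Your proof is correct, and it shares the paper's skeleton -- a slot-filling reduction from \textsc{3-Partition} with $u_{start}=\tb^m$ and fill rules inserting $\ta^{k_i}$ into a nondeterministically chosen slot -- but the mechanism that forces \emph{exactly three} numbers per group is genuinely different. The paper makes every rule deposit a counter letter: $w_i\to w'_i$ with $w'_i=\ta^{k_i}\tb\tc$, and sets $u_{end}=(\ta^s\tb\tc^3)^m$, so the target word itself certifies that each slot received exactly three rule applications; this argument works verbatim for the variant of \textsc{3-Partition} as stated in the paper, which carries no numerical range constraint and instead asks explicitly for groups of three. You keep the fill rules counter-free, append $m$ close rules $\tb\to\td$, and let the target $(\ta^s\td)^m$ certify only that each slot sums to $s$; the ``three per slot'' conclusion is then delegated to the arithmetic constraint $s/4<k_i<s/2$ of the Garey--Johnson formulation. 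That is legitimate, since the range-constrained formulation is the canonical strongly $\npclass$-complete one, but you should say explicitly that this is the problem you reduce from: with the version as stated in the paper (no range constraint, groups of three demanded instead) your backward direction would fail, because a successful derivation could leave a slot holding two or four numbers summing to $s$, which is not a valid $3$-partition. The trade-off is minor: the paper spends an extra letter $\tc$ and gets away with $3m$ rules and no assumption on the $k_i$; you use $4m$ rules and the range constraint, with a slightly simpler target word. The $\npclass$-membership argument (guessing the derivation, all sentential forms polynomially bounded) is the same observation the paper makes in one line.
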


Our reduction centres on the construction, for any instance $\mu$ of $\rep$, of a regular-ordered word equation $\alpha_\mu = \beta_\mu$ which possesses a specific form of solution -- which we shall call \emph{overlapping} -- if and only if the instance of $\rep$ has a solution. By restricting the form of solutions in this way, the exposition of the rest of the reduction is simplified considerably.
\begin{definition}
Let $n \in \mathbb{N}$, $\mu$ be an instance of $\rep$ with $u_{start}$, $u_{end}$ and rules $w_i \to {w^\prime_i}$ for $1\leq i \leq n$. Let $\#$ be a `new' letter not occurring in any component of $\rep$. We define the regular-ordered equation $\alpha_\mu = \beta_\mu$ such that:
\begin{align*}
\alpha_\mu &:= x_1\ w_1 \   x_2\ w_2  \  \cdots \  x_n \  w_n \  x_{n+1} \ \# \  u_{end},\\ 
\beta_\mu &:= \#\  u_{start} \  x_1 \  {w^\prime_1} \  x_2\  {w^\prime_2} \  x_3\ \  \cdots\ \  x_n\  {w}^\prime_n \  x_{n+1}.
\end{align*}
A solution $h : (X \cup \Sigma)^* \to \Sigma^*$ is called \emph{overlapping} if, for every $1 \leq i \leq n$, there exists $z_i$ such that $w_iz_i$ is a suffix of $h(x_i)$ and
$h(\# u_{start} x_1 \cdots {w^\prime_{i-1}} x_i) = h(x_1 w_1 \cdots x_i w_i) z_i.$
\end{definition}



Of course, satisfiability of a class of word equations asks whether any solution exists, rather than just overlapping solutions. Hence, before we prove our claim that $\alpha_\mu = \beta_\mu$ has an overlapping solution if and only if $\mu$ satisfies $\rep$, we present a construction of an equation $\alpha = \beta$ which has a solution if and only if $\alpha_\mu = \beta_\mu$ has an overlapping solution. Essentially, this shows that solving the satisfiability of regular-ordered equations is as hard as solving the satisfiability of word equations when we restrict our search to overlapping solutions.
\begin{lemma}\label{lem:sol3}
Let $\mu$ be an instance of $\rep$. There exists a regular-ordered equation $\alpha = \beta$ of size $O(|\alpha_\mu\beta_\mu|)$ such that $\alpha = \beta$ is satisfiable if and only if there exists an overlapping solution to $\alpha_\mu = \beta_\mu$.
\end{lemma}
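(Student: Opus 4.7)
The plan is to construct $\alpha,\beta$ as a mild augmentation of $\alpha_\mu,\beta_\mu$ whose sole purpose is to rule out the non-overlapping solutions. I would proceed in three steps.

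First, I would reduce the overlap condition to a family of length inequalities. Taking lengths in the defining equation of an overlapping solution, $|h(\# u_{start} x_1 w'_1 \cdots w'_{i-1} x_i)| = |h(x_1 w_1 \cdots x_i w_i)| + |z_i|$, all the $|h(x_j)|$-terms cancel and one obtains $|z_i| = 1 + |u_{start}| + \sum_{j<i}(|w'_j|-|w_j|) - |w_i|$, an integer depending only on the input $\mu$. If any $|z_i|<0$, no overlapping solution can exist and I simply output a trivially unsatisfiable regular-ordered equation (e.g.\ $\ta = \tb$). Assuming $|z_i|\geq 0$ throughout, a short alignment argument shows that, provided $h(\alpha_\mu)=h(\beta_\mu)$ holds, overlap reduces to the single family $|h(x_i)| \geq |w_i|+|z_i|$ for $1\leq i\leq n$: the requirement that $w_iz_i$ be a suffix of $h(x_i)$ is then automatic, because the two occurrences of $h(x_i)$ (one on each side) sit inside the common solution word at positions shifted by exactly $|w_i|+|z_i|$.

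Second, I would enforce these lower bounds by a local gadget at every variable. For each $i$, I introduce a distinct fresh constant marker $\#_i$ (not appearing in $\Sigma$ nor in any $w_j, w'_j, u_{start}, u_{end}$) together with at most two fresh variables $y_i, y'_i$. The intended placement is: on the $\alpha$-side, $\#_i$ sits exactly $|w_i|+|z_i|$ positions past the end of the slot of $x_i$, flanked by $y_i$; on the $\beta$-side, $\#_i$ sits immediately after $x_i$, flanked by $y'_i$. Because $\#_i$ is fresh it cannot match any other letter, so the two occurrences of $\#_i$ must align in the solution word; equating their positions gives exactly $|h(x_i)| \geq |w_i|+|z_i|$. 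All new variables occur at most once per side, and all new material is inserted in left-to-right order consistently with the original positions of $x_1,\ldots,x_{n+1}$, so the resulting equation is regular-ordered. The added length per index $i$ is $O(|w_i|+|z_i|) = O(|\alpha_\mu\beta_\mu|)$ summed over $i$, keeping $|\alpha|+|\beta| = O(|\alpha_\mu\beta_\mu|)$.

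Third, I would verify the equivalence. For ($\Rightarrow$), any solution $h'$ of $\alpha = \beta$ must align every $\#_i$ because they do not occur elsewhere; by the construction this forces $|h'(x_i)| \geq |w_i|+|z_i|$ for each $i$, and restricting $h'$ to $\{x_1,\ldots,x_{n+1}\}$ yields, by the first step, an overlapping solution of $\alpha_\mu=\beta_\mu$. For ($\Leftarrow$), given an overlapping solution $h$ of $\alpha_\mu=\beta_\mu$, the overlap condition pins down the content of each fresh variable as a uniquely determined factor of the solution word at its designated position; a direct check shows the extended substitution satisfies $\alpha=\beta$. The main obstacle will be the fine-tuning in the second step: the gadget must use only fresh letters (to prevent accidental matching with $u_{start},u_{end},w_i,w'_i$), the fresh variables must be inserted in a globally consistent order (to preserve regular-orderedness), and different $\#_i$-gadgets must not interfere with each other---in particular, no two distinct markers may ``cross'' in the solution word. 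Ensuring all three simultaneously while keeping the total size linear is where the technical content of the lemma lies; everything else is length-bookkeeping on top of the observation in the first step.
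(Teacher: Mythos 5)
Your first step is a correct and genuinely useful observation: for \emph{any} substitution the difference $|h(\# u_{start}\, x_1 w^\prime_1\cdots w^\prime_{i-1}x_i)| - |h(x_1w_1\cdots x_iw_i)|$ telescopes to a constant $c_i$ depending only on $\mu$, and for a solution of $\alpha_\mu=\beta_\mu$ the overlap condition is indeed equivalent to the family of length bounds $|h(x_i)|\ge |w_i|+c_i$ (with the trivial no-instance output when some $c_i<0$). The gap is in your second step, which is exactly where the content of the lemma lies and which you yourself leave open. The gadget as sketched is not realisable: ``placing $\#_i$ exactly $|w_i|+|z_i|$ positions past the end of the slot of $x_i$'' is not something a pattern can express, because that offset lands inside the image of $x_{i+1}$, whose length is not fixed --- a pattern prescribes a sequence of symbols, not solution-word coordinates, and you can neither pad with fresh variables (their lengths are free) nor with constants (the letters at those positions are unknown parts of the solution). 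Moreover, if each marker is flanked by fresh variables occurring only once per side, aligning the two occurrences of $\#_i$ yields a relation between prefix lengths in which those fresh images enter as free parameters (and in the natural layouts $|h(x_i)|$ cancels altogether), so the intended inequality $|h(x_i)|\ge|w_i|+c_i$ is not actually enforced.

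Even granting a working length gadget, your step 3 asserts without proof precisely the delicate point: that restricting a solution of the augmented equation to $x_1,\ldots,x_{n+1}$ solves $\alpha_\mu=\beta_\mu$. Since your insertions sit at different positions relative to the original material on the two sides, the extra symbols do not cancel, and the restricted substitution need not equate the original sides at all; a solution correspondence has to be built into the construction. This is what the paper's proof does differently: it replaces each $x_i$ on \emph{both} sides by the same decorated block $x_{3i-2}\,\#\,x_{3i-1}\,\#\,x_{3i}$, reusing the already-present symbol $\#$, so that solutions of the two equations correspond by splitting/gluing at the inserted $\#$'s; the overlap is then recovered not from length inequalities but by counting $\#$'s (corresponding prefixes of the two sides differ by exactly one $\#$, each reconstituted $h(x_i)$ contains at least two $\#$'s, and $w_i$ is $\#$-free, which forces $w_iz_i$ to be a suffix of $h(x_i)$). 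The forward direction additionally pumps $h(x_i)\mapsto v_ih(x_i)$ via Lemma~\ref{lem:solutions} to guarantee two $\#$'s per image --- a step with no analogue in your sketch. As written, the proposal defers the construction the lemma is about, so it does not constitute a proof.
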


The proof of the fact that the equation $\alpha_\mu = \beta_\mu$ has an overlapping solution if and only if $\mu$ satisfies $\rep$ has two main parts. The first is a slightly technical characterisation of overlapping solutions to $\alpha_\mu = \beta_\mu$ in terms of the periods $v_i$ of the images $h(x_i)$, which play a key role in modelling the `computation steps' of the rewriting system $\mu$.

\begin{lemma}\label{lem:solutions}
Let $\mu$ be a an instance of $\rep$ with $u_{start}, u_{end}$ and rules $w_i \to {w^\prime_i}$ for $1 \leq i \leq n$.
A substitution $h : (X \cup \Sigma)^* \to \Sigma^*$ is an overlapping solution to $\alpha_\mu = \beta_\mu$ if and only if there exist prefixes $v_1,v_2, \ldots, v_{m+1}$ of $h(x_1),h(x_2),\ldots, h(x_{n+1})$ such that:
\begin{enumerate}
\item{} $h(x_i) \; w_i $ is a prefix of $v_{i}^\omega$ for $1 \leq i \leq n$, and
\item{} $v_1 = \# u_{start}$, and for $2 \leq i \leq n$, $v_i = y_{i-1}{w^\prime_{i-1}}$, and
\item{} $y_n \; {w^\prime_n} \; h(x_{n+1}) = h(x_{n+1}) \; \# u_{end}$,
\end{enumerate}
where for $1 \leq i \leq n$, $y_i$ is the suffix of $h(x_i)$ of length $|v_i| - |w_i|$. 
\end{lemma}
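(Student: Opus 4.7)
The plan is to prove both directions by translating between the prefix-equalities in the overlapping condition and the periodicity structure captured by the $v_i$.

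For the ($\Leftarrow$) direction, assume prefixes $v_1,\ldots,v_n$ with the stated properties. The key tool is the identity $v_i h(x_i) = h(x_i) w_i y_i$ for each $1 \leq i \leq n$. Indeed, condition 1 makes $|v_i|$ a period of $h(x_i) w_i$ with $v_i$ as its prefix of length $|v_i|$, and since $|v_i| = |w_i| + |y_i|$, comparing $v_i h(x_i)$ and $h(x_i) w_i y_i$ position by position (using that $v_i$ is a prefix of $h(x_i)$, that $|v_i|$ is a period of $h(x_i)$, and that the suffix of $h(x_i)w_i$ of length $|v_i|$ equals $w_i y_i$ by the periodicity) gives the identity. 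With this in hand, I argue by induction on $i$ that $h(x_1 w_1 \cdots x_i w_i)\, y_i = h(\# u_{start}\, x_1 w'_1 \cdots w'_{i-1} x_i)$. The base case $i=1$ is the identity at $i=1$ combined with $v_1 = \# u_{start}$. The inductive step appends $w'_{i-1} h(x_i)$ to both sides of the equality at $i-1$, rewrites $y_{i-1} w'_{i-1} = v_i$, and applies the identity at $i$ to push $v_i$ across $h(x_i)$. Applying the case $i = n$, appending $w'_n h(x_{n+1})$, and using condition 3 to replace $y_n w'_n h(x_{n+1})$ by $h(x_{n+1}) \# u_{end}$ yields $h(\alpha_\mu) = h(\beta_\mu)$, while the intermediate equalities certify that $h$ is overlapping with $z_i := y_i$.

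For the ($\Rightarrow$) direction, assume $h$ is overlapping. For $i = 1$, the overlapping equation $\# u_{start}\, h(x_1) = h(x_1)\, w_1 z_1$ forces $\# u_{start}$ to be the prefix of $h(x_1)$ of length $|w_1| + |z_1| = |\# u_{start}|$ (this length bound uses that $w_1 z_1$ is a suffix of $h(x_1)$); setting $v_1 := \# u_{start}$ also exhibits $|v_1|$ as a period of $h(x_1) w_1 z_1$, hence of $h(x_1) w_1$. For $i \geq 2$, cancel the common prefix $h(x_1 w_1 \cdots x_{i-1} w_{i-1})$ from the overlapping equations at $i-1$ and $i$ to obtain $z_{i-1} w'_{i-1} h(x_i) = h(x_i) w_i z_i$; set $v_i := z_{i-1} w'_{i-1}$. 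Length bookkeeping on the overlapping equations gives $|z_{i-1}| = |v_{i-1}| - |w_{i-1}| = |y_{i-1}|$, so $z_{i-1} = y_{i-1}$ and hence $v_i = y_{i-1} w'_{i-1}$, establishing condition 2. Since $|v_i| = |w_i| + |z_i| \leq |h(x_i)|$, matching the first $|v_i|$ characters of the cancellation equation shows $v_i$ is a prefix of $h(x_i)$, and matching the remaining positions gives $|v_i|$ as a period of $h(x_i) w_i$, establishing condition 1. Condition 3 follows by writing $h(\alpha_\mu)$ and $h(\beta_\mu)$ in factored form, using the overlapping equation at $i = n$ to rewrite $h(\# u_{start} x_1 \cdots w'_{n-1} x_n)$ as $h(x_1 w_1 \cdots x_n w_n) z_n$, and canceling the common prefix $h(x_1 w_1 \cdots x_n w_n)$ to get $h(x_{n+1}) \# u_{end} = z_n w'_n h(x_{n+1}) = y_n w'_n h(x_{n+1})$.

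The main obstacle is careful bookkeeping: verifying that each candidate $v_i$ is a prefix (and not merely a factor) of $h(x_i)$, confirming the identification $z_i = y_i$ via length matching without circularity, and ensuring that the periodicity of $h(x_i) w_i$ extracted from the overlapping equation aligns with the prefix interpretation of $v_i$. Once the combinatorial identity $v_i h(x_i) = h(x_i) w_i y_i$ is in hand, both directions reduce to routine inductions and cancellations of common prefixes.
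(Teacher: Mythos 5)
Your proposal is correct and takes essentially the same route as the paper's own proof: the key identity $v_i\,h(x_i) = h(x_i)\,w_i\,y_i$ (which the paper states as $h(x_{i+1}w_{i+1})\,y_{i+1} = y_i\,{w^\prime_i}\,h(x_{i+1})$) drives the same induction in the forward direction, and the backward direction uses the same cancellation of common prefixes between successive overlapping equalities, the same length bookkeeping identifying $z_i = y_i$, and the same treatment of Condition~(3). One cosmetic slip: the suffix of $h(x_i)w_i$ of length $|v_i|$ is $y_i w_i$, not $w_i y_i$ --- what your identity actually needs is that the length-$|v_i|$ factor of $v_i^\omega$ starting immediately after the prefix $h(x_i)$ equals $w_i y_i$ (by the period $|v_i|$), and the identity then also yields, by comparing suffixes of length $|v_i|$, that $w_i y_i$ is a suffix of $h(x_i)$, which is the part of the overlapping condition you left implicit.
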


We shall now take advantage of Lemma~\ref{lem:solutions} in order to demonstrate the correctness of our construction of $\alpha_\mu = \beta_\mu$ -- i.e., that it has an overlapping solution if and only if $\mu$ satisfies $\rep$. The general idea of the construction/proof is that for a solution $h$, the periods $v_i$ of the variables $h(x_i)$ -- which are obtained as the `overlap' between the two occurrences of $h(x_i)$ -- store the $i^{th}$ stage of a rewriting $u_{start} \to \ldots \to u_{end}$. In actual fact, this is obtained as the conjugate of $i$ starting with $\#$. Thus the solution-word, when it exists, stores a sort-of rolling computation history.

\begin{figure}[ht]
\vspace*{-0.3cm}
	\centering
  \includegraphics[width=\textwidth ]{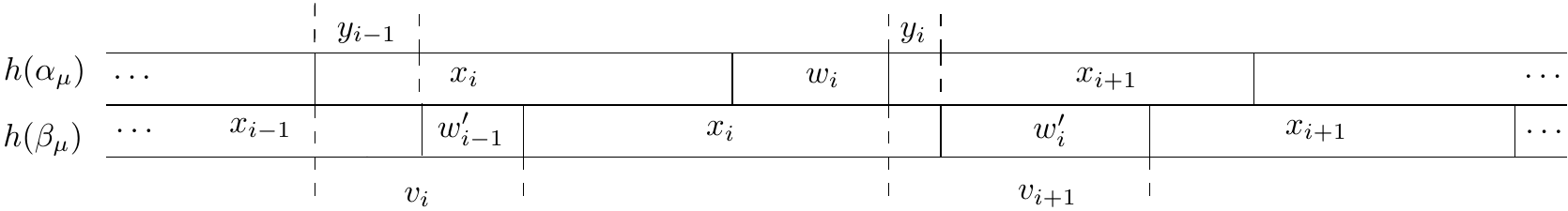}
	\caption{The periods of $h(x_i)$ in an overlapping solution to $\alpha_\mu = \beta_\mu$. The period of $h(x_i)$ is $v_i$, and since $w_i y_i$ is a suffix of $h(x_i)$ with the same length as $v_i$, we have that $w_i y_i$ is a cyclic shift of $v_i$ (i.e., they are conjugate) -- so $v_i = s w_i t$ and $(w_i) y_i = (w_i) t s$ for some $s,t$. $v_{i+1}$ is conjugate to $s {w^\prime_i} t$ since $v_{i+1} = y_i {w^\prime_i} = t s {w^\prime_i}$. Thus $v_{i+1}$ is obtained from $v_i$ by ``applying'' the rule $w_i \to {w^\prime_i}$.}	\label{Periods}
\vspace*{-0.3cm}
\end{figure}

\begin{lemma}\label{lem:sol2}
Let $\mu$ be a an instance of $\rep$ with $u_{start}, u_{end}$ and rules $w_i \to {w^\prime_i}$ for $1 \leq i \leq n$.
There exists an overlapping solution $h : (X \cup \Sigma)^* \to \Sigma^*$ to the equation $\alpha_\mu = \beta_\mu$ if and only if $\mu$ satisfies $\rep$.
\end{lemma}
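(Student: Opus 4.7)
My plan is to use Lemma~\ref{lem:solutions} to rephrase the existence of an overlapping solution as the existence of prefixes $v_1,\ldots,v_{n+1}$ of $h(x_1),\ldots,h(x_{n+1})$ (setting $v_{n+1} := y_n w'_n$) meeting conditions (1)--(3). The key invariant I maintain in both directions is that each $v_i$ is a cyclic conjugate of $\# u_{i-1}$, where $u_0 := u_{start}, u_1, \ldots, u_n := u_{end}$ traces the rewriting. Since $\#$ does not appear in $u_{start}, u_{end}$, or in any $w_j, w'_j$, and $v_1 = \# u_{start}$ contains exactly one $\#$, the inductive structure of the $v_i$ (given by condition~(2)) will force every $v_i$ to contain exactly one $\#$; consequently each $v_i$ has a unique $\#$-starting conjugate, which canonically recovers $\# u_{i-1}$.

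For the direction (overlapping solution $\Rightarrow$ $\rep$-solution), assume $h$ and the $v_i$ are given. Because $v_i$ is a prefix of $h(x_i)$ and $h(x_i) w_i$ is a prefix of $v_i^\omega$ (condition~(1)), the length-$|v_i|$ suffix of $h(x_i)$ has the form $w_i y_i$ and is a conjugate of $v_i$. Using that $w_i, w'_i$ are $\#$-free, a straightforward induction shows each $v_i$ contains exactly one $\#$. Define $u_{i-1}$ so that $\# u_{i-1}$ is the unique $\#$-starting conjugate of $v_i$; condition~(2) then yields $u_0 = u_{start}$. Writing $y_i = y'_i \# y''_i$ with $y'_i, y''_i$ $\#$-free, the conjugate $w_i y_i = w_i y'_i \# y''_i$ cyclically rotates to $\# y''_i w_i y'_i$, so $u_{i-1} = y''_i w_i y'_i$ exhibits an occurrence of $w_i$ at position $|y''_i|+1$. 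Replacing this occurrence by $w'_i$ gives $u_i := y''_i w'_i y'_i$, and the $\#$-starting conjugate of $v_{i+1} = y'_i \# y''_i w'_i$ is precisely $\# u_i$; hence $u_{i-1} \to u_i$ is a valid rule-$i$ application. Finally, condition~(3), $y_n w'_n h(x_{n+1}) = h(x_{n+1}) \# u_{end}$, is a standard conjugacy identity forcing $y_n w'_n$ and $\# u_{end}$ to be conjugate; combined with the previous step at $i = n$ (which shows $v_{n+1} = y_n w'_n$ rotates to $\# u_n$), we conclude $u_n = u_{end}$.

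For the converse, given a valid computation $u_0 \to \cdots \to u_n$ together with its rule-application positions $u_{i-1} = a_i w_i b_i$, $u_i = a_i w'_i b_i$, I set $v_1 := \# u_{start}$ and, for $2 \le i \le n$, $v_i := b_{i-1} \# a_{i-1} w'_{i-1}$; each such $v_i$ is visibly a rotation of $\# u_{i-1}$. For each $i \le n$, I choose $h(x_i)$ to be the prefix of $v_i^\omega$ of length $j_i + |v_i|$, where $j_i$ is a $0$-indexed position in $v_i^\omega$ at which the $w_i$-occurrence corresponding to the rule-$i$ application in $u_{i-1}$ starts. A short modular-arithmetic computation (with a subcase for when $w_i$ straddles a period boundary of $v_i$ in $v_i^\omega$) then verifies that $v_i$ is a prefix of $h(x_i)$, that $h(x_i) w_i$ is a prefix of $v_i^\omega$, and that the length-$(|v_i|-|w_i|)$ suffix $y_i$ of $h(x_i)$ equals $b_i \# a_i$; therefore $v_{i+1} = y_i w'_i = b_i \# a_i w'_i$, matching the definition. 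Finally, setting $h(x_{n+1}) := b_n$ yields $y_n w'_n h(x_{n+1}) = b_n \# a_n w'_n b_n = h(x_{n+1}) \# u_{end}$, securing condition~(3), and Lemma~\ref{lem:solutions} delivers the overlapping solution.

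The principal obstacle I foresee is the technical bookkeeping in the forward direction: verifying that the chosen position $j_i$ and length of $h(x_i)$ yield $y_i = b_i \# a_i$ exactly, including the case where the chosen $w_i$-occurrence wraps around the boundary between consecutive copies of $v_i$ inside $v_i^\omega$. The singleton-$\#$ induction in the backward direction, although conceptually clean, is the conceptual linchpin allowing the extraction of the canonical linear representative $u_{i-1}$ from each conjugacy class, and must be threaded carefully through every step of the argument.
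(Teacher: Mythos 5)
Your proposal is correct and follows essentially the same route as the paper: both directions go through Lemma~\ref{lem:solutions}, the converse uses exactly the paper's conjugacy argument (each $v_i$ contains a single $\#$, so its unique $\#$-starting conjugate $\#u_{i-1}$ records the $i$-th sentential form, and Condition~(3) forces $v_{n+1}$ conjugate to $\#u_{end}$), and your forward construction yields the same images $h(x_i)=v_i\,b_{i-1}\#a_i$ as the paper's explicit substitution, differing only in the harmless simpler choice $h(x_{n+1})=b_n$ in place of the paper's longer word.
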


\begin{proof}
Suppose firstly that $\mu$ satisfies $\rep$. Then there exist $s_1,s_2, \ldots, s_n, t_1, t_2, \ldots, t_n$ such that $ u_{start} = s_1 w_1 t_1$, for $1 \leq i \leq n$, $ s_i {w^\prime_i} t_i = s_{i+1} w_{i+1} s_{i+1}$ and $s_n {w^\prime_n} t_n = u_{end}$. Let $h: (X \cup \Sigma)^* \to \Sigma^*$ be the substitution such that $h(x_1) = \# s_1 w_1 t_1 \# s_1$, $h(x_{n+1}) = t_n \# s_n {w^\prime_n} t_n \# s_n {w^\prime_n} t_n$, and for $2 \leq i \leq n$,
 $h(x_i) = t_{i-1}\# s_{i-1} {w^\prime_{i-1}} t_{i-1} \# s_i$. 
 We shall now show that $h$ satisfies Lemma~\ref{lem:solutions}, and hence that $h$ is an overlapping solution to $\alpha_\mu = \beta_\mu$. 

Let $v_1 =  \# s_1 w_1 t_1$, let $y_1 := t_1 \# s_1$, and for $2 \leq i \leq n$, let $v_i := t_{i-1}\# s_{i-1} w_{i-1}$ and let $y_i :=  t_{i} \# s_i$. Let $y_n := t_n \# s_n$. Note that for $1 \leq i\leq n$, $v_i$ is a prefix of $h(x_i)$, and moreover, since $s_{i-1} {w^\prime_{i-1}} t_{i-1} =  s_i w_i t_i$, $y_i$ is the suffix of $h(x_i)$ of length $|v_i| - |w_i|$.

It is clear that $h$ satisfies Condition~(1) of Lemma~\ref{lem:solutions} for $i = 1$. For $2\leq i \leq n$, we have 
$h(x_i) w_i t_i = t_{i-1}\# s_{i-1} w_{i-1} t_{i-1} \# s_i w_i t_i = t_{i-1}\# s_{i-1} w_{i-1} t_{i-1} \# s_{i-1} w_{i-1} t_{i-1}$, which is a prefix of $v_i^\omega$, and hence $h(x_i) w_i$ is also a prefix of $v_i^\omega$. Since $v_i$ is also clearly a prefix fo $h(x_i)$, $h$ satisfies Condition~(1) for all $i$. Moreover, $v_1 = \# u_{start}$, and for $2 \leq i \leq n$, $y_{i-1} {w^\prime_{i-1}} = v_i = t_{i-1}\# s_{i-1} w_{i-1} =  v_i$, so $h$ satisfies Condition~(2). Finally,
\[y_n {w^\prime_n} h(x_{n+1}) = t_n \# s_n {w^\prime_n} t_n \# s_n {w^\prime_n} t_n \# s_n {w^\prime_n} t_n = h(x_{n+1}) \# u_{end}\]
so $h$ also satisfies Condition~(3).

Now suppose that $h$ is an overlapping solution to $\alpha_\mu = \beta_\mu$. Then $h$ satisfies Conditions~(1),~(2) and~(3) of Lemma~\ref{lem:solutions}. Let $v_i$, $y_i$ be defined according to the lemma for $1 \leq i \leq n$, and let $v_{n+1} = y_n {w^\prime_n}$. We shall show that $\mu$ satisfies $\rep$ as follows. We begin with the following observation.
\begin{claim}\label{sol2:claim1}
For $1 \leq i \leq n$, $y_i w_i $ and $v_i$ are conjugate. Hence, for $1 \leq i \leq n+1$, $|v_i|_\# = 1$.
\end{claim}
\begin{proof}[Proof (Claim \ref{sol2:claim1}).]
By Condition~(1) of Lemma~\ref{lem:solutions}, $h(x_i) w_i$ is a prefix of $v_i^\omega$. Since $y_i$ is the suffix of $h(x_i)$ of length $|v_i|-|w_i|$, this implies that $y_i w_i$ is a factor of $v_i^\omega$ of length $|v_i|$ and is therefore conjugate to $v_i$. By Condition~(2) of Lemma~\ref{lem:solutions} (and by definition, above, in the case of $i = n$), for $1 \leq i \leq n$, $v_{i+1} = y_{i} {w^\prime_{i}}$. Since $y_{i}w_{i}$ is conjugate to $v_{i}$ and $\#$ does not occur in either $w_i$ or ${w^\prime_i}$, it follows that $|v_{i+1}|_\# = |v_{i}|_\#$. Since $|v_1|_\# = |\# u_{start}|_\# =1$, the statement follows.
\end{proof}
Let $\tilde{v}_i$ be the (unique) conjugate of $v_i$ which has $\#$ as a prefix. We have the following important observation.
\begin{claim}\label{sol2:claim2}
For $1 \leq i \leq n$, there exist $s_i,t_i$ such that $\tilde{v}_i = \# s_i w_i t_i$ and $\tilde{v}_{i+1} = \# s_i {w^\prime_i} t_i$.
\end{claim}
\begin{proof}[Proof (Claim \ref{sol2:claim2}).]
By Claim~\ref{sol2:claim1}, $v_{i+1}$ contains an occurrence of $\#$ and by Condition~(2) of Lemma~\ref{lem:solutions}, $v_{i+1} = y_i {w^\prime_i}$ where $y_i$ is the suffix of $h(x_i)$ of length $|v_i| - |w_i|$. Note that since ${w^\prime_i}$ does not contain $\#$, it must occur at least once in $y_i$. Let $t_i$ be the (proper) prefix of $y_i$ up to the first occurrence of $\#$, and let $s_i$ be the corresponding suffix, so that $y_i = t_i \# s_i$. Then by Condition~(2) of Lemma~\ref{lem:solutions}, $v_{i+1} = t_i \# s_i {w^\prime_i}$, so $\tilde{v}_{i+1} = \# s_i {w^\prime_i} t_i$. Moreover, by Claim~\ref{sol2:claim1}, $y_i w_i = t_i \# s^\prime_i w_i$ is conjugate to $v_i$ and it follows that $\tilde{v}_i = \# s^\prime_i w_i t_i$.
\end{proof}

Recall from Condition~(3) of Lemma~\ref{lem:solutions} that $y_n {w^\prime_n} h(x_{n+1}) = v_{n+1} h(x_{n+1}) = h(x_{n+1}) \# u_{end}$. Consequently, $v_{n+1}$ and $\# u_{end}$ are conjugate, so $\tilde{v}_{n+1} = \# u_{end}$. Moreover, by Condition~(2) of Lemma~\ref{lem:solutions}, $v_1 = \tilde{v}_1 = \#u_{start}$. Thus, it follows from Claim~\ref{sol2:claim2} that $\mu$ satisfies $\rep$.\end{proof}

As it is clear that the equation $\alpha_\mu = \beta_\mu$ (and hence also the equation $\alpha =\beta$ given in Lemma~\ref{lem:sol3}) may be constructed in polynomial time, our reduction from $\rep$ is complete. So, by Lemmas \ref{rep_is_hard} and \ref{lem:sol2}, we have shown Theorem \ref{sat_is_hard}. 

\section{$\npclass$-upper bound}
In this section, we show that the satisfiability of  regular-ordered word equations is in $\npclass$.
\begin{theorem}\label{cor:inNP} 
The satisfiability problem for regular-ordered equations is in $\npclass$.
\end{theorem}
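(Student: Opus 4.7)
The plan is to reduce the theorem directly to Lemma~\ref{lem:shortsolutions}: it suffices to exhibit a polynomial upper bound on the length of some minimal solution of any satisfiable regular-ordered equation $\alpha = \beta$ of size $n$. In fact, Proposition~\ref{prop:linearsolutions} (whose proof sketch I outline below) promises the stronger statement that $|h(x)| = O(n)$ for every variable $x$ in some minimal solution $h$. Granting this, the whole solution word $h(\alpha)$ has length at most $O(n^2)$, which is far below the $2^{P(n)}$ threshold required by Lemma~\ref{lem:shortsolutions}. NP-membership then follows at once: we nondeterministically guess the solution word together with the factorisation induced by the variables, and verify in polynomial time that it simultaneously matches $\alpha$ and $\beta$.

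For the underlying Proposition~\ref{prop:linearsolutions}, I would fix a minimal solution $h$ and let $w = h(\alpha) = h(\beta)$. Because each variable occurs at most once on each side and the shared variables appear in the same order in $\alpha$ and in $\beta$, the two factorisations of $w$ induced by the LHS and the RHS align monotonically: for every variable $x$ occurring on both sides, the interval $I_\alpha(x) \subseteq [1,|w|]$ on which $h(x)$ sits in the LHS factorisation and the interval $I_\beta(x)$ on which it sits in the RHS factorisation have equal length, and as we sweep the shared variables in order, the left endpoints of both families of intervals advance monotonically. Variables that occur on only one side play an auxiliary role: one can argue that their images may be taken to be $\varepsilon$ in a minimal solution (subject to the boundary constants being consistent), so the core task is to bound $|h(x)|$ for each shared variable $x$.

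The next step is to exploit the structural constraints on $h(x)$ coming from this alignment. Let $d(x)$ be the displacement between the left endpoints of $I_\alpha(x)$ and $I_\beta(x)$. Whenever $|d(x)| < |h(x)|$, the two occurrences of $h(x)$ in $w$ overlap, forcing $h(x)$ to be $|d(x)|$-periodic. Since a regular equation is in particular quadratic, Lemma~\ref{lem:squares1} controls how such periodic blocks can coexist inside a minimal solution; informally, long periodic regions must be anchored either to constants or to the boundaries of other variables, and in a regular-ordered equation both of these anchoring resources are bounded linearly by $n$. From this I would derive the bound $|h(x)| = O(n)$ by contradiction: if some $|h(x)|$ exceeded a suitable linear function of $n$, we could delete one full period from $h(x)$ and propagate the deletion coherently through the alignment, producing a strictly shorter solution and contradicting minimality of $h$.

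The hard part is precisely this propagation argument: shortening $h(x)$ by one period forces consistent shortenings of every variable whose alignment interval overlaps the edited region, and one must check both that the resulting substitution still solves $\alpha = \beta$ and that the total length strictly decreases. What makes this feasible is the regular-ordered structure, which ensures that displacements $d(x)$ evolve monotonically from left to right, so a local period deletion can be globalised into a single coherent shrinkage rather than producing incompatible edits on the two sides. Managing this bookkeeping is exactly what Lemma~\ref{lem:squares1} is designed for, and it constitutes the technical heart of Proposition~\ref{prop:linearsolutions}; once the linear bound $|h(x)| = O(n)$ is in hand, Theorem~\ref{cor:inNP} is an immediate application of Lemma~\ref{lem:shortsolutions}.
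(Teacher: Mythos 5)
Your top-level derivation is exactly the paper's: Proposition~\ref{prop:linearsolutions} gives $|h(x)| < n$ for every variable in a minimal solution, hence solution words of length $O(n^2)$, and Lemma~\ref{lem:shortsolutions} then yields membership in $\npclass$ by guess-and-check. If you were entitled to cite the proposition as a black box, nothing more would be needed. The gap lies in your proposed justification of Proposition~\ref{prop:linearsolutions}, which departs from the paper's argument and contains concrete errors. First, single-occurring variables cannot in general be assumed to map to $\varepsilon$ in a minimal solution: the regular-ordered equation $x_1 = \ta\, x_2\, \tb$ forces $h(x_1)=\ta\tb$; such variables still need a length bound, which your reduction to ``shared variables only'' silently drops. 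Second, the displacements $d(x)$ do \emph{not} evolve monotonically: in $x_1 \ta\ta x_2 = \ta x_1 x_2 \ta$ the offset between the two occurrences of $h(x_1)$ is $+1$ while for $h(x_2)$ it is $-1$, because the offset changes by the difference of the constant-gap lengths on the two sides, which can have either sign. Since your ``globalise a local period deletion'' step --- which you yourself identify as the technical heart and do not carry out --- rests on this monotonicity, the argument as sketched would fail.

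The paper's actual proof of Proposition~\ref{prop:linearsolutions} needs no periodicity or propagation analysis; it is a counting argument over the chains of corresponding positions. Every position of a minimal solution lies on one of fewer than $n$ sequences $S_p$ (each anchored at a terminal or a single-occurring variable --- this is also how single-occurring variables get bounded). Within one sequence, two positions of the same variable would, by regularity, have the same occurrence index, producing two similar consecutive subsequences and contradicting minimality via Lemma~\ref{lem:squares1}; and the ordered structure forces a sequence to move strictly rightward through distinct variables, so each sequence contributes at most one position to each $h(x)$. Hence $|h(x)| < n$ directly. Note also that Lemma~\ref{lem:squares1} is a statement about squares in these position chains, not a bookkeeping device for deleting periods from overlapping occurrences; to use it you must actually construct the chains, which your alignment/displacement formulation never does.
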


In order to achieve this, we extend the classical approach of filling the positions (see e.g.,~\cite{kar:the} and the references therein). This method essentially comprises of assuming that for a given equation $\alpha = \beta$, we have a solution $h$ with specified lengths $|h(x)|$ for each variable $x$. The assumption that $h$ satisfies the equation induces an equivalence relation on the positions of each $h(x)$: if a certain position in the solution-word is produced by an occurrence of the $i^{th}$ letter of $h(x)$ on the RHS and an occurrence of the $j^{th}$ letter of $h(y)$ on the LHS, then these two positions must obviously have the same value/letter  and we shall say that these occurrences \emph{correspond}. These individual equivalences can be combined to form equivalence classes, and if no contradictions occur (i.e., two different terminal symbols $\mathtt{a}$ and $\mathtt{b}$ do not belong to the same class), a valid solution can be derived.

\begin{figure}[ht]
\vspace*{-0.3cm}
	\centering
  \includegraphics[width=\textwidth ]{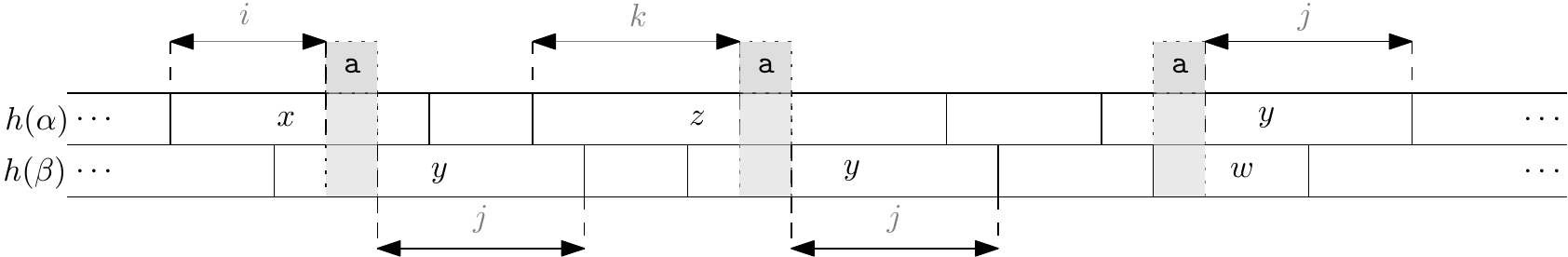}
	\caption{Fixing positions: since an occurrence of the $i^{th}$ letter of $h(x)$ corresponds to an occurrence of the $(|h(y)|-j)^{th}$ letter of $y$, whose other occurrences correspond to the $k^{th}$ letter of $h(z)$ and first letter of $h(w)$, all these positions are equivalent and contain the same letter, e.g.,~$\ta$.
	}
	\label{fig2}
\vspace*{-0.3cm}
\end{figure}

Such an approach already allows for some straightforward observations regarding the (non-)minimality of a solution $h$. In particular, if an equivalence class of positions is not associated with any terminal symbol, then all positions in this class can be mapped to $\varepsilon$, resulting in a strictly shorter solution. On the other hand, even for our restricted setting, this observation is insufficient to provide a bound on the length of minimal solutions. In fact, in the construction of the equivalence classes we ignore, or at least hide, some of the structural information about the solution. In what follows, we shall see that by considering the exact `order' in which positions are equated, we are able to give some more general conditions under which a solution is not minimal. 

Our approach is, rather than just constructing these equivalence classes, to construct sequences of equivalent positions, and to then analyse similar sequences. For example, one occurrence of a position $i$ in $h(x)$ might correspond to an occurrence of position $j$ in $h(y)$, while another occurrence of position $j$ in $h(y)$ might correspond to position $k$ in $h(z)$, and so on, in which case we would consider the sequence: $ \ldots \to (x,i) \to (y,j) \to (z,k) \to \ldots.$

The sequence terminates when either a variable which occurs only once or a terminal symbol is reached. For general equations, considering all such sequences leads naturally to a graph structure where the nodes are positions $(x,i) \in X\times \mathbb{N}$, and number of edges from each node is determined by the number of occurrences of the associated variable. Each connected component of such a graph corresponds to an equivalence class of positions as before. In the case of quadratic (and therefore also regular) equations, where each variable occurs at most twice, each `node' $(x,i)$ has at most two edges, and hence our graph is simply a set of disjoint chains, without any loops. As before, each chain (called in the following sequence) must be associated with some occurrence of a terminal symbol, which must occur either at the start or the end of the chain. Hence we have $k<n$ sequences, where $n$ is the length of the equation, such that every position $(x,i)$ where $x$ is a variable in our equation and $1 \leq i \leq |h(x)|$ occurs in exactly one sequence. It is also not hard to see that the total length of the sequences is upper bounded by $2|h(\alpha)|$.

In order to be fully precise, we will distinguish between different occurrences of a variable/terminal symbol by associating each with an index $z \in \mathbb{N}$ by enumerating occurrences from left to right in $\alpha\beta$. Of course, when considering quadratic equations, $z \in \{1,2\}$ for each variable $x$. Formally, we define our sequences for a given solution $h$ to a quadratic equation $\alpha = \beta$ as follows: a \emph{position} is a tuple $(x,z,d)$ such that $x$ is a variable or terminal symbol occurring in $\alpha\beta$, $1 \leq z \leq |\alpha\beta|_x$, and $1 \leq d \leq |h(x)|$. Two positions $(x,z,d)$ and $(y,z^\prime,d^\prime)$ \emph{correspond} if they generate the same position in the solution-word. The positions are \emph{similar} if they belong to the same occurrence of the same variable (i.e., $x=y$ and $z = z^\prime$). For each position $p$ associated with either a terminal symbol or a variable occurring only once in $\alpha\beta$, we construct a sequence $S_p = p_1,p_2,\ldots$ such that
\begin{itemize}
\item{}$p_1 = p$ and $p_2$ is the (unique) position corresponding with $p_1$, and 
\item{}for $i \geq 2$, if $p_i = (x,z,d)$ such that $x$ is a terminal symbol or occurs only once in $\alpha\beta$, then the sequence terminates, and
\item{}for $i \geq 2$, if $p_i = (x,z,d)$,  such that $x$ is a variable occurring twice, then $p_{i+1}$ is the position corresponding to the (unique) position $(x,z^\prime,d)$ with $z^\prime \not= z$ (i.e., the `other' occurrence of the $i^{th}$ letter in $h(x)$). 
\end{itemize}
We extend the idea of similarity from positions to sequences of positions in the natural way: two sequences $p_1,p_2,\ldots,p_i$ and $q_1,q_2,\ldots q_i$ are \emph{similar} whenever $p_j$ and $q_j$ are similar for all $j \in \{1,2,\ldots, i\}$. Our main tool is the following lemma, which essentially shows that if a sequence contains two similar consecutive subsequences (so, a square), then the solution defining that sequence is not minimal. 
\begin{figure}[ht]
\vspace*{-0.3cm}
	\centering
  \includegraphics[width=\textwidth ]{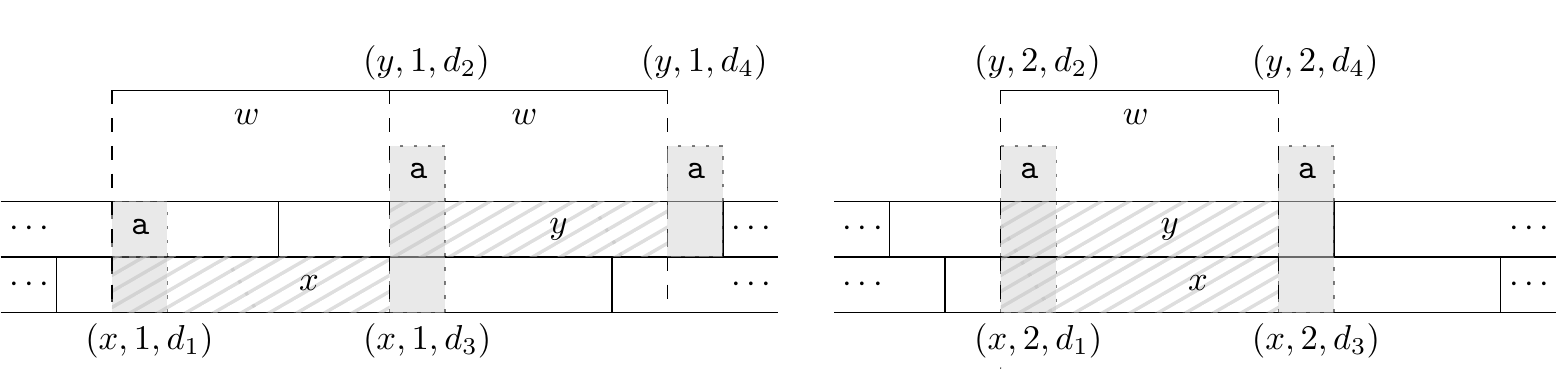}
	\caption{Illustration of Lemma \ref{lem:squares1} in the case of a short subsequence $\ldots, (x,1,d_1), (y,2,d_2), (x,1,d_{3}), (y,2,d_{4}), \ldots$: since the two sequences starting at $(x,1,d_1)$ and $(x,1,d_{3})$ are similar, they define a common region $w$ (shaded). Since they are consecutive, the first and last occurrences of $w$ are adjacent, and on opposite sides of the equation. Thus, removing the region $w$ from $h(x)$ and $h(y)$ does not alter the fact that $h$ satisfies the equation.}
	
	\label{fig3}
\vspace*{-0.3cm}
\end{figure}

\begin{lemma}\label{lem:squares1}
Let $h$ be a solution to a quadratic equation $\alpha = \beta$, and let $p$ be a position associated with a single-occurring variable or terminal symbol. If the sequence $S_p$ has a subsequence $p_1,p_2,\ldots p_t,p_{t+1},p_{t+2},\ldots,p_{2t}$ such that $p_1,p_2,\ldots p_t$ and $p_{t+1},p_{t+2},\ldots,p_{2t}$ are similar, then $h$ is not minimal.
\end{lemma}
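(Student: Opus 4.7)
The plan is to build a strictly shorter substitution $h^\prime$ satisfying $\alpha = \beta$, contradicting minimality of $h$. The idea is to extract a uniform shift $D$ from the square and use it to delete length-$D$ substrings from the images of the variables visited by the first half of the subsequence. Writing $p_i = (x_i, z_i, d_i)$ and letting $g_i$ denote the global position of $p_i$ in the solution word, similarity gives $(x_i, z_i) = (x_{t+i}, z_{t+i})$, so the two chain steps $p_i \to p_{i+1}$ and $p_{t+i} \to p_{t+i+1}$ take the same variable-occurrence transition $(x_i, z_i) \to (x_i, z_i')$; since the global-position shift produced by such a step depends only on that transition, the two shifts agree. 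Telescoping, $g_{t+i} - g_i = g_{t+1} - g_1$ is a constant $D$ independent of $i$, and because $p_i$ and $p_{t+i}$ live in the same occurrence, the index shift matches: $d_{t+i} = d_i + D$.

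Assume $D > 0$; otherwise swap the two halves, and note that $D = 0$ would force $p_i = p_{t+i}$, contradicting the fact that chains are simple paths. For each variable $x$ that appears as some $x_i$ with $1 \leq i \leq t$, delete from $h(x)$ the substring of length $D$ starting at position $d_i$; leave the remaining variables unchanged. The geometric picture, as in Figure~\ref{fig3}, is that the first half of the subsequence traces a region of length $D$ in the solution word and the second half traces the adjacent region of length $D$ on the opposite side of the equation; removing exactly one such region from both the LHS and the RHS simultaneously preserves the equation while strictly shortening it, so $|h^\prime(\alpha)| < |h(\alpha)|$.

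The main obstacle is showing that this deletion is well-defined and correct. For well-definedness, the same variable $x$ may be visited several times by $p_1,\ldots,p_t$, possibly through both of its occurrences, and the prescribed deletions must be consistent on the single string $h(x)$. The simple-path property of the chain, combined with the constancy of $D$, forces any two such candidate intervals in $h(x)$ either to coincide or to be disjoint, so their union is well defined. For correctness, one tracks the chain step by step, verifying that every deletion on the LHS is paired with a deletion on the RHS at the same global position in the solution word, so that $h^\prime(\alpha) = h^\prime(\beta)$; the key geometric input is the aforementioned adjacency of the two traced regions on opposite sides of the equation, which is exactly what makes the transition $p_t \to p_{t+1}$ land on the boundary between the removed band and its translate.
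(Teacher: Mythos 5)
Your overall strategy matches the paper's: extract a uniform shift $D$ (the paper's constant $C$) with $d_{t+i}=d_i+D$, delete the length-$D$ bands $h(x_i)[d_i..d_i+D-1]$ for $1\leq i\leq t$, observe that each interior deletion is mirrored by a deletion at the same global position on the opposite side of the equation, and finish with the adjacency created by the step $p_t\to p_{t+1}$, where two adjacent copies of the same word $w$ are removed from opposite sides. Your derivation of the constant shift (step shifts depend only on the variable-occurrence transition, then telescope) is a valid substitute for the paper's Claim~\ref{squares1:claim1}, and the $D\neq 0$ observation is fine.

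The genuine gap is in your ``well-definedness and correctness'' step. First, the dichotomy you assert --- that two candidate intervals inside the same $h(x)$ must coincide or be disjoint because of the simple-path property and the constancy of $D$ --- does not follow from those facts: for two visits $(x,z,d_i)$ and $(x,z,d_j)$ of the same occurrence, distinctness of positions only forces $d_i\neq d_j$ and $d_j\neq d_i\pm D$, which still allows $0<|d_i-d_j|<D$, i.e.\ overlapping, non-identical bands; you offer no proof, and none exists from the stated ingredients alone. Second, and more importantly, this dichotomy is not even the right thing to establish: overlaps among \emph{interior} bands are harmless, since those bands are deleted at identical global positions from both sides; what must be excluded is that some other deleted band intersects one of the two \emph{end} bands, namely the band of occurrence $(x_1,z_1)$ at $[d_1,d_1+D-1]$ or of occurrence $(x_t,\overline{z}_t)$ at $[d_t,d_t+D-1]$. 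If, for instance, a band belonging to the same variable-occurrence as $p_1$ overlaps $[d_1,d_1+D-1]$ without coinciding with it, then the sets of deleted global positions on the two sides no longer differ by exactly the two adjacent copies of $w$, and the final cancellation breaks down (the two sides may even end up with different lengths). You give no argument ruling this out. The paper closes exactly this hole by assuming w.l.o.g.\ that the square is length-minimal and proving Claim~\ref{squares1:claim2}: any intrusion of another position of the same variable-occurrence into the end intervals yields a strictly shorter square, contradicting minimality. Without the minimality assumption (or an equivalent mechanism) your deletion step is unjustified, so the argument is incomplete at its decisive point.
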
 

\begin{proof}
Assume that $S_p$ has such a subsequence and assume w.l.o.g.\ that it is length-minimal (so $t$ is chosen to be as small as possible). For $1 \leq i \leq 2t$, let $p_i = (x_i, z_i, d_i)$ and note that by definition of similarity, for $1 \leq i \leq t$, $x_i = x_{i+t}$ and $z_i = z_{i+t}$. Assume that $d_1 < d_{t+1}$ (the case that $d_1 > d_{t+1}$ may be treated identically).
\begin{claim}\label{squares1:claim1}\emph{
Suppose that $(x,z,i), (x,z,i^\prime), (x,z,i^{\prime\prime})$ are positions with $i<i^\prime<i^{\prime\prime}$ such that $(x,z,i)$ and $(x,z,i^{\prime\prime})$ correspond to $(y,z^\prime,j)$ and $(y,z^\prime,j^{\prime\prime})$ respectively. Then $j^{\prime\prime} - j = i^{\prime\prime} - i$, and there exists $j^\prime$ with $j^\prime - j = i^\prime - i$ such that $(x,z,i^\prime)$ and $(y,z^\prime,j^\prime)$ correspond.}
\end{claim}

\begin{proof}[Proof (Claim \ref{squares1:claim1})]
Follows directly from the fact that there is only one occurrence of $x$ with associated index $z$ and only one occurrence of $y$ with associated index $z^\prime$.
\end{proof}

A straightforward consequence of Claim~1 is that there exists a constant $C\in\mathbb{N}$ such that for all $i\in \{1,2,\ldots, t\}$, $d_{i+t} - d_i = C$. Intuitively, each pair of similar positions $p_i = (x_i,z_i,d_i)$ and $p_{i+t} = (x_i,z_i,d_i+C)$ are the end positions of a factor $h(x)[d_i.. d_i+C-1]$, which as we shall see later on in the proof, can be removed to produce a shorter solution $g$. 

We can also infer from Claim~1 that for positions $(x,z,i), (x,z,i^\prime)$, $(x,z,i^{\prime\prime})$ with $i<i^\prime<i^{\prime\prime}$, if the subsequences of length $n$ beginning with $(x,z,i)$ and $(x,z,i^{\prime\prime})$ are similar, then so are the subsequences of length $n$ beginning with $(x,z,i)$ and $(x,z,i^\prime)$. It follows that the subsequence does not contain a position `between' $(x_1,z_1,d_1)$ and $(x_1,z_1,d_1+C)$ (and likewise for $(x_t,z_t,d_t)$ and $(x_t,z_t,d_t+C)$, and hence that the respective factors $h(x_1)[d_1.. d_1+C-1]$ and $h(x_t)[d_t.. d_t+C-1]$ do not overlap with other such factors, which will be useful later.

\begin{claim}\label{squares1:claim2}\emph{
Let $j \in \{2,\ldots, t, t+2, \ldots, 2t\}$ such that $x_j = x_1 (=x_{t+1})$ and $z_j = z_1 (=z_{t+1})$. Then $d_j \notin \{d_1,\ldots, d_{t+1} (= d_1+C)\}$. Likewise, if $j \in \{1,\ldots, t-1, t+1, \ldots 2t-1\}$ such that $x_j = x_t$ and $z_j = z_t$, then $d_j \notin \{d_t, \ldots, d_{2t}\}$.
}
\end{claim}

\begin{proof}[Proof (Claim \ref{squares1:claim2})]
We prove the statement for $x_j = x_1$. The case that $x_j = x_t$ holds symmetrically. Suppose to the contrary that $x_j = x_1, z_j = z_1$ and $d_j \in \{d_1,\ldots, d_{t+1}\}$. Clearly $d_j \notin \{d_1,d_{t+1}\}$, otherwise the sequence contains the same position twice and is therefore an infinite cycle which contradicts the definition. Then by Claim~1, since the sequences of length $t$ beginning with $(x_1,z_1,d_1)$ and $(x_1,z_1,d_{t+1})$ are similar, the sequences of length $t$ beginning with $(x_1,z_1,d_1)$, $(x_1,z_1,d_j)$ and $(x_1,z_1,d_{t+1})$ are pairwise similar. However, $(x_1,z_j,d_j) (= (x_j,z_j,d_j))$ is contained in either the sequence of length $t$ beginning with $(x_1,z_1,d_1)$ or with $(x_1,z_1,d_{1+t})$. In both cases, we get a shorter subsequence $p^\prime_1,p^\prime_2,\ldots p^\prime_{t^\prime},p^\prime_{t^\prime+1},p^\prime_{t^\prime+2},\ldots,p^\prime_{2t^\prime}$ such that $p^\prime_1,p^\prime_2,\ldots p^\prime_{t^\prime}$ and $p^\prime_{t^\prime+1},p^\prime_{t^\prime+2},\ldots,p^\prime_{2t^\prime}$ are similar. This contradicts our assumption that $t$ is as small as possible.
\end{proof}

We are now ready for the main argument of the proof. Using the observations above, we shall remove parts of the solution $h$ to obtain a new, strictly shorter solution and thus show that $h$ is not minimal as required. To do this, we shall define a new equation $\alpha^\prime = \beta^\prime$ obtained by replacing the second occurrence of each variable $x$ (when it exists) with a new variable $x^\prime$. We note a few obvious facts. Firstly, we can derive a solution $h^\prime$ to $\alpha^\prime = \beta^\prime$ from the solution $h$ to our original equation by simply setting $h^\prime(x) = h^\prime(x^\prime) =  h(x)$ for all $x \in \var(\alpha\beta)$. Likewise, any solution to $\alpha^\prime =\beta^\prime$ for which this condition holds (i.e., $h^\prime(x) = h^\prime(x^\prime)$ for all $x \in \var(\alpha\beta)$) induces a solution $g$ to our original equation $\alpha = \beta$ given by $g(x) = h^\prime(x) (= h^\prime(x^\prime))$. Finally, for each position $(x,z,d)$ in the original solution $h$, there exists a unique ``associated position'' in $h^\prime$ given by $h(x)[d]$ if $z= 1$ and $h(x^\prime)[d]$ if $z = 2$. Furthermore, it follows from the definitions that for any pair of positions $p,q$ which correspond (in terms of $h$), we can remove the associated positions from $h^\prime$ and the result will still be a valid solution to our modified equation $\alpha^\prime = \beta^\prime$ (although such a solution may no longer induce a valid solution to our original equation, since it is no longer necessarily the case that $h(x) = h(x^\prime)$ for all $x$). 

We construct our shorter solution $g$ to $\alpha = \beta$ as follows. Let $h^\prime$ be the solution to $\alpha^\prime = \beta^\prime$ derived from $h$.
Recall from the definition of $S_p$ that, for $1\leq i < t$, the positions $(x_i, \overline{z}_i,d_i)$ and $(x_{i+1},z_{i+1},d_{i+1})$ correspond, where $\overline{z} = (z + 1) \mod 2$ (i.e., so that $\overline{z} \not= z$). Moreover, $(x_i,\overline{z}_i,d_{i}+C)$ and $(x_{i+1},z_{i+1},d_{i+1}+C)$ correspond, and thus by Claim~1, $(x_i,\overline{z}_i,d_i + k)$ and $(x_{i+1},z_{i+1}, d_{i+1} + k)$ correspond for $0 \leq k \leq C-1$. Since corresponding positions must have the same value/letter, it follows that there exists a factor $w \in \Sigma^+$ such that $w = h(x_i)[d_i.. d_i + C-1] (= h^\prime(x)[d_i.. d_i + C-1] = h^\prime(x^\prime)[d_i.. d_i + C-1])$ for $1\leq i \leq t$.

For each corresponding pair of positions $(x_i,\overline{z}_i,d_i + k), (x_{i+1},z_{i+1}, d_{i+1} + k)$ such~that $0 \leq k \leq C-1$ and $1 \leq i \leq t-1$, delete the associated positions in $h^\prime$ to obtain a new solution $h^{\prime\prime}$ to $\alpha^\prime = \beta^\prime$. Thus, for every position associated with $(x_i,\overline{z}_i,d_i+k)$ such~that $1 < i \leq t$, we also delete the position associated with $(x_i, z_i, d_i +k)$. Hence, for all $x \notin \{x_1,x_t\}$, $h^{\prime\prime}(x) = h^{\prime\prime}(x^\prime)$. In order to guarantee that $h^{\prime\prime}(x) = h^{\prime\prime}(x^\prime)$ for $x \in \{x_1,x_t\}$, we must also delete the positions associated with $(x_1,z_1,d_1+k)$ and $(x_t,\overline{z}_t,d_t+k)$ for $0\leq k < C$. To~see that, in doing so, we still have a valid solution to $\alpha^{\prime} = \beta^\prime$, note firstly that, by Claim~2, we have not deleted any of these positions already. Moreover, it follows from the sequence $S_p$ that $(x_t,\overline{z}_t,d_t)$ corresponds to $(x_1,z_1,d_1+C)$. Assume $z_1 = 1$ (the~case~$z_1 = 2$ is symmetric). It follows that $z_t = 1$ (since $\overline{z}_t \not= z_1$). Thus there exists an index $m$ such that $h^{\prime\prime}(x_1)[d_1.. d_1+ C-1]$ generates the factor $w$ starting at position $m$ in $h^{\prime\prime}(\alpha^\prime)$ and $h^{\prime\prime}(x_t)[d_t.. d_t+ C-1]$ generates the (same) factor $w$ starting at position $m+|w|$ in $h^{\prime\prime}(\beta)$. It is straightforward to see that removing these factors (i.e., deleting the positions associated with $(x_1,z_1,d_1+k)$ and $(x_t,\overline{z}_t,d_t+k)$ for $0\leq k \leq C-1$) does not affect the agreement of the two sides of the equation. Thus we obtain a shorter solution $h^{\prime\prime}$ to $\alpha^\prime = \beta^\prime$ such that $h(x) = h(x^\prime)$ for all variables $x$, hence a shorter solution $g$ given by $g(x) = h^{\prime\prime}(x)$~to~$\alpha=\beta$.
\end{proof}

Using Lemma~\ref{lem:squares1}, we obtain as a direct consequence that minimal solutions to regular-ordered equations are at most linear in the length of the equation.

\begin{proposition}\label{prop:linearsolutions}
Let $E$ be a regular-ordered word equation with length $n$, and let $h$ be a minimal solution to $E$. Then $|h(x)| < n$ for each variable $x$ occurring in $E$.
\end{proposition}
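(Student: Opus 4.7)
The plan is to apply Lemma~\ref{lem:squares1} with $t = 1$ to the rigid walk-structure of sequences in regular-ordered equations. Let $h$ be a minimal solution to $E : \alpha = \beta$ of length $n$, and fix a variable $x$ in $E$ with $L := |h(x)|$. I would first treat the principal case in which $x$ occurs on both sides. Let $A_x$ and $C_x$ denote the starting physical positions, in $h(\alpha) = h(\beta)$, of the LHS and RHS occurrences of $x$, and set $\delta := C_x - A_x$.

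The crux of the argument is to show $L \leq |\delta|$. Suppose otherwise; then one may pick $d \in [1, L]$ with $d + \delta \in [1, L]$, which is possible precisely because $L > |\delta|$. The position $(x, 1, d)$ occurs as some internal $p_i$ of a sequence $S$. By the definition of the sequence, $p_{i+1}$ is the position corresponding to $(x, 2, d)$, which lies on the same side as $p_i$ and at physical location $A_x + d - 1 + \delta$. Our choice of $d$ places this inside the LHS $x$-region, so $p_{i+1} = (x, 1, d + \delta)$, and $p_i, p_{i+1}$ then share the similarity type $(x, 1)$; Lemma~\ref{lem:squares1} with $t = 1$ contradicts the minimality of $h$. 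Hence $L \leq |\delta|$. The degenerate case $\delta = 0$ (which would make the walk loop) is handled separately: when $A_x = C_x$ the two occurrences of $h(x)$ align exactly in the solution-word, so substituting $h(x) := \varepsilon$ yields a strictly shorter solution and thus $L = 0$.

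It then remains to bound $|\delta|$ by $n - 1$. Because $E$ is ordered, the images of the common variables preceding $x$ contribute equally to $A_x$ and $C_x$ and therefore cancel in $\delta$; what remains is a signed sum of terminal-block lengths preceding $x$ together with contributions from single-side variables preceding $x$. For a single-side variable $y$, an analogous walk-argument, starting from the sequence-endpoint $(y, 1, d)$ and following the walk on the opposite side (which passes only through common-variable regions already controlled by the bound above), yields $|h(y)| < n$ by the same application of Lemma~\ref{lem:squares1}. Combining these via simultaneous induction on the variable order in $E$ then gives $|\delta| < n$ and hence $L < n$ for every variable. The main obstacle is precisely this last step: adapting the walk-argument cleanly to single-occurrence variables and setting up the induction so that all the variable-length contributions to $\delta$-style shifts still respect the strict bound $L < n$.
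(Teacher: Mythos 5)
Your first step is sound and close in spirit to the first half of the paper's argument: for a variable $x$ occurring on both sides, if $|h(x)|$ exceeded the offset $\delta$ between the two physical occurrences of $h(x)$, the sequence through $(x,1,d)$ would immediately return to the same occurrence of $x$ at $(x,1,d+\delta)$, and Lemma~\ref{lem:squares1} with $t=1$ (regularity forces the two occurrence indices to coincide) contradicts minimality; the degenerate case $\delta=0$ is also handled correctly. The problem is everything after that. First, the bound for single-occurring variables is not proved: such a variable has no second occurrence, so there is no shift $\delta$ and no ``return to the same occurrence'' step; the sketched ``analogous walk starting from the sequence endpoint $(y,1,d)$'' never specifies what quantity the walk controls or why it bounds $|h(y)|$ -- following one walk from one position of $h(y)$ says nothing about how many positions $h(y)$ has. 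Second, even granting $|h(y)|<n$ for every single-side variable preceding $x$, your identity $\delta=|h(\beta_1)|-|h(\alpha_1)|$ (after cancelling the common variables, which orderedness does give you) is a signed sum over all terminal blocks and all single-side variables preceding $x$; there can be $\Theta(n)$ of these, so this yields at best $|\delta|=O(n^2)$, not $|\delta|<n$, and no mechanism is offered by which the ``simultaneous induction on the variable order'' makes these contributions cancel or telescope. So the proposal does not establish $|h(x)|<n$; at best it could be salvaged into a polynomial bound for the two-sided variables, conditional on the unproved single-occurrence case.

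The missing idea is the counting argument the paper uses, which treats all variables uniformly and avoids $\delta$ altogether: in a minimal solution every position lies on one of the fewer than $n$ sequences (each is anchored at a terminal position of the equation, and there are fewer than $n$ of those), so it suffices to show that each sequence contains at most one position of each variable. This is exactly where regular-orderedness enters: immediate repetitions $(x,z,d),(x,z',d')$ are excluded by regularity plus Lemma~\ref{lem:squares1} (your $t=1$ observation), and for a subsequence $(x,\cdot,\cdot),(x',\cdot,\cdot),(x'',\cdot,\cdot)$ the ordered structure forces the variables visited by the sequence to move strictly in one direction along the equation, so no variable is ever revisited. This monotonicity along sequences is the crux your proposal lacks, and it is what converts ``fewer than $n$ sequences'' directly into $|h(x)|<n$ for every variable, single-occurring ones included.
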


\begin{proof}
Firstly, we note that for a minimal solution $h$ to $E$, every position of $h$ occurs somewhere in one of the associated sequences $S_p$. Since there can be no more than $n$ such sequences, it is sufficient to show that each one contains at most one position $(x,z,d)$ for each variable $x$. Let $h$ be a minimal solution to $E$ and let $S_p$ be any sequence. Firstly, we note that $S_p$ does not contain a subsequence $(x,z,d), (x,z^\prime,d^\prime)$. In particular, if such a subsequence existed, then since $E$ is regular, we would have $z = z^\prime$, and Lemma~\ref{lem:squares1} would imply a contradiction. Now consider a subsequence $(x,z,d),(x^\prime,z^\prime,d^\prime),(x^{\prime\prime},z^{\prime\prime},d^{\prime\prime})$. By definition, this implies that $(x,\overline{z},d)$ corresponds to $(x^\prime,z^\prime,d^\prime)$, and that $(x^\prime, \overline{z}^\prime, d^\prime)$ corresponds to $(x^{\prime\prime},z^{\prime\prime},d^{\prime\prime})$. Suppose that $x$ occurs to the left of $x^\prime$ in $E$ (and note that since $E$ is regular-ordered, this holds for both sides of the equation). Then $(x,\overline{z},d)$ occurs to the left of $(x^\prime,\overline{z}^\prime,d^\prime)$. Since they correspond, it follows that $(x^\prime,z^\prime,d^\prime)$ occurs to the left of $(x^{\prime\prime},z^{\prime\prime},d^{\prime\prime})$, and thus that $x^\prime$ occurs to the left of $x^{\prime\prime}$. Since $x \not= x^\prime$ and $x^\prime \not= x^{\prime\prime}$, it is clear by iteratively applying this argument that each further position in the sequence must belong to a new variable occurring further right in $E$, and our statement holds. The case that $x$ occurs to the right of $x^\prime$ may be treated symmetrically.
\end{proof}

We can see that, in terms of restricting the lengths of individual variables, the result in Proposition~\ref{prop:linearsolutions} is optimal. For instance, in a minimal solution $h$ to the equation
 $w \tc x_1 = x_1 \tc w $, with $w\in \{a,b\}^*$ ,the variable $x_1$ is mapped to~$w $, so $|h(x)|=|E|-2 \in O(|E|)$. 
Furthermore, Theorem \ref{cor:inNP} follows now as a direct consequence of Proposition~\ref{prop:linearsolutions} and Lemma~\ref{lem:shortsolutions}, as the length of a minimal solution to a regular-ordered equation $\alpha=\beta$ is $O(|\alpha\beta|^2)$.

Note that it is a simple consequence of Proposition~\ref{prop:linearsolutions} that the satisfiability of a regular-ordered equation $E$ with a constant number $k$ of variables can be checked in $\pclass$-time: we guess the length ($\leq |E|$) of the image of each variable in the minimal solution, and then it can be checked in $\pclass$-time whether a solution with these lengths actually~exists.
\section{Tractable equations}

Finally, we discuss a class of equations for which satisfiability is in $\pclass$. Tractability was obtained so far from two sources: bound the number of variables by a constant (e.g., one or two-variable equations~\cite{Jez2016,DabPla2004}), or heavily restrict their structure (e.g., regular equations whose sides do not have common variable, or equations that only have one repeated variable, but at least one non-repeated variable on each side \cite{ManSchNow16}). 

The class we consider slightly relaxes the previous restrictions. As the satisfiability of quadratic or even regular-ordered equations is already $\npclass$-hard it seems reasonable to consider here patterns where the number of repeated variables is bounded by a constant (but may have an arbitrary number of non-repeated variables). More precisely, we consider here non-cross equations with only one repeated variable. This class generalises naturally the class of one-repeated variables.

\begin{theorem}\label{prop:1repvar}
Let $x\in X$ be a variable and $\mathcal{D}$ be the class of word equations $\alpha = \beta$ such that $\alpha,\beta\in (\Sigma\cup X)^*$ are non-cross and each variable of $X$ other than $x$ occurs at most once in $\alpha\beta$. Then the satisfiability problem for $\mathcal{D}$ is in $\pclass$.
\end{theorem}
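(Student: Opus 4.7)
The plan is to reduce the problem, in deterministic polynomial time, to polynomially many instances of a one-variable word equation with regular-language constraints on auxiliary prefix/suffix portions. Such constrained one-variable equations are tractable: the linear-time algorithm of~\cite{Jez2016} for one-variable equations extends to regular constraints via standard NFA product constructions. I would first decompose each side as $\alpha = p_1 T_\alpha p_2$ and $\beta = q_1 T_\beta q_2$, where $T_\alpha = x u_1 x \cdots u_{k-1} x$ and $T_\beta = x v_1 x \cdots v_{\ell-1} x$ are the (possibly empty) maximal contiguous $x$-blocks separated only by the constant factors $u_i, v_j \in \Sigma^*$, and $p_1, p_2, q_1, q_2$ are non-cross patterns over $\Sigma \cup (X\setminus\{x\})$ in which every variable occurs exactly once. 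Consequently the ``wing'' languages $L(p_1), L(p_2), L(q_1), L(q_2)$ are regular, represented by NFAs of size polynomial in $|\alpha\beta|$.

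The degenerate cases $k=0$ or $\ell=0$ reduce immediately to checking whether a one-variable pattern can be matched inside a regular language, which is already tractable. For the main case $k, \ell \geq 1$, I would parametrise a solution $h$ by $n = |h(x)|$ and the offset $\delta = |h(p_1)| - |h(q_1)|$ (WLOG $\delta \geq 0$). Once the pair $(n, \delta)$ is fixed, the absolute positions of every occurrence of $x$ in the purported solution word are determined, so the constraint $\alpha = \beta$ splits into: (i) a one-variable equation in $x$ describing the agreement of $T_\alpha(h(x))$ and $T_\beta(h(x))$ on their overlapping region; and (ii) a finite conjunction of regular-language membership constraints on the portions of $h(p_1), h(p_2), h(q_1), h(q_2)$ forced to align against specified factors of the solution word. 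Both components can be decided in polynomial time for each fixed $(n,\delta)$.

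The heart of the argument, and the main obstacle, is to show that in a minimal solution both $n$ and $|\delta|$ are polynomially bounded in $|\alpha\beta|$, so that only polynomially many pairs $(n,\delta)$ need to be enumerated. I would establish this by adapting the sequence-based analysis of Lemma~\ref{lem:squares1}: given a minimal solution, track the chains of corresponding positions across the equation and argue that if $n$ or $|\delta|$ is too large, some chain must contain two similar consecutive subsequences whose excision---coupled with a compensating shortening of some single-occurrence variable image, so as to preserve the regular constraints on the wings---produces a strictly shorter solution, contradicting minimality. The difficulty is that the variable $x$ occurs many times, so the position chains may branch or merge, and the quadratic machinery of Lemma~\ref{lem:squares1} does not apply as a black box. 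Non-crossness, however, forces all occurrences of $x$ on each side to appear in a single block separated only by constants; together with the single-occurrence property of every other variable, this confines the branching of the chains through $T_\alpha$ and $T_\beta$ to a predictable, block-wise pattern, which should be enough to recover the required polynomial bound and thereby complete the polynomial-time algorithm.
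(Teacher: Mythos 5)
Your overall algorithmic skeleton (split each side into a constant-separated $x$-block plus regular ``wings'', fix $n=|h(x)|$ and an alignment offset, and decide each fixed choice via a one-variable equation with regular constraints) is a reasonable packaging, but the step you yourself call the heart of the argument --- the polynomial bound on $n$ and on the offset in a minimal solution --- is exactly the real content of the theorem, and your plan for it does not go through. Lemma~\ref{lem:squares1} is proved only for quadratic equations: its whole mechanism (positions organised into disjoint linear chains, and a repeated similar block excised from the two occurrences of each affected variable while the two sides remain aligned) relies on every variable occurring at most twice. Here $x$ occurs arbitrarily many times, a single position of $h(x)$ participates in many correspondences at once, and deleting a factor of $h(x)$ shortens all of its occurrences simultaneously, shifting the alignment of the two $x$-blocks by different amounts at different places; the ``compensating shortening of some single-occurrence variable image'' you invoke is precisely the nontrivial part, and you give no argument that such compensation always exists or preserves the equation. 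The paper's conclusion even remarks that beyond quadratic equations the position structure becomes a directed graph and the linear-sequence machinery no longer applies. The paper's own proof of the length bound takes a genuinely different, word-combinatorial route: if two constant factors $u_i,v_j$ meet in the solution word, the overlap forces a period on $h(x)$ and a repetition can be removed; otherwise all occurrences of $h(x)$ overlap, which yields the system $x=x_i w_i x_{i+1}$, and then the conjugacy equation (Lemma~\ref{lem:conjugacyequation}), Fine and Wilf (Lemma~\ref{lem:fineandwilf}) and the technical Lemma~\ref{lem:systemofequations} force $h(x)$ into the form $(uv)^p u$ with $|uv|$ short, after which large exponents $p$ can be reduced. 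Without an argument at this level your proof is incomplete at its central point.

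A secondary issue: you assert that the one-variable algorithm of~\cite{Jez2016} ``extends to regular constraints via standard NFA product constructions''. This is plausible (for instance via the known parametrisation of solution sets of one-variable equations), but it is not standard enough to be used as a black box without justification. Note that the paper sidesteps the need for it: once all lengths are polynomially bounded, a candidate length assignment is verified by filling the positions, and the case in which both sides contain a non-repeated variable is already covered by~\cite{ManSchNow16}, so only the case $\alpha = x u_1 x \cdots u_k x$ needs the new analysis.
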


In the light of the results from \cite{ManSchNow16}, it follows that the interesting case of the above theorem is when the equation $\alpha = \beta$ is such that $\alpha = x u_1 x u_2 \cdots u_k x$ and $\beta = \beta^{\prime}v_0 x v_1 x v_2 \cdots x v_k \beta^{\prime\prime}$ where $v_0,v_1,\ldots v_k, u_1, u_2, \ldots u_k \in \Sigma^*$ and $\beta^{\prime},\beta^{\prime{\prime}}$ are regular patterns that do not contain $x$ and are variable disjoint. Essentially, this is a matching problem in which we try to align two non-cross patterns, one that only contains a repeated variable and constants, while the other contains the repeated variable, constants, and some wild-cards that can match any factor. The idea of our proof is to first show that such equations have minimal solutions~of~polynomial length. Further, we note that if we know the length of $\beta^{\prime}$ (w.r.t. the length of $\alpha$) then we can determine the position where the factor $v_0 x v_1 x v_2 \cdots x v_k$ occurs in $\alpha$, so the problem boils down to seeing how the positions of $x$ are fixed by the constant factors $v_i$. Once this is done, we check if there exists an assignment of the variables of $\beta^{\prime}$ and $\beta^{\prime\prime} $ such that the constant factors of these patterns fit correctly to the corresponding prefix, respectively, suffix of $\alpha$. 

%
%

\section{Conclusions and Prospects}

The main result of this paper is the $\npclass$-completeness of the satisfiability problem for regular-ordered equations. While the lower bound seems remarkable to us because it shows that solving very simple equations, which also always have  short solutions, is $\npclass$-hard, the upper bound seems more interesting from the point of view of the tools we developed to show~it. We expect the combinatorial analysis of sequences of equivalent positions in a minimal solution to an equation (which culminated here in Lemma \ref{lem:squares1}) can be applied to obtain upper bounds on the length of the minimal solutions to more general equations than just the regular-ordered ones. It would be interesting to see whether this type of reasoning leads to polynomial upper bounds on the length of minimal solutions to regular (not ordered) or quadratic equations, or to exponential upper bounds on the length of minimal solutions of non-cross or cubic equations. In the latter cases, a more general approach should be used, as the equivalent positions can no longer be represented as linear sequences, but rather as directed graphs.


Lemma \ref{lem:squares1} helps us settle the status of the satisfiability problem for regular-ordered equations with regular constraints. This problem is in $\npclass$, when the languages defining the scope of the variables are all accepted by finite automata with at most $c$ states, where $c$ is a~constant, as well as in the case egular-ordered equations whose sides contain exactly the same variables (see the proofs in Appendix). The satisfiability problem for regular-ordered equations with general regular constraints still remains $\pspaceclass$-complete.

Regarding the final section our paper, it seems interesting to us to see whether deciding the satisfiability of word equations with one repeated variable (so without the non-cross sides restriction) is still tractable. Also, it seems interesting to analyse the complexity of word equations where the number of repeated variables is bounded by a constant. 
\newpage
\bibliography{equations}

\newpage
\section*{Appendix}
\noindent {\bf Proof of Proposition \ref{prop:exp-length}}:
\begin{proof}
The minimal (and single) solution to the equation maps $x_i$ to $\ta^{2^i}$. Indeed, $x_n$ must be mapped to $\ta^\ell$ for some $\ell$, and none of the variables $x_i$, with $1\leq i\leq n-1$ can be mapped to a word containing $\tb$ (or the number of $\tb$'s would be greater in the image of the RHS). So, $x_1$ will be mapped to $a^2$, $x_2=x_1^2$ to $a^4$, and, in general, $x_{i+1}=x_i^2$, for $1\leq i\leq n-1$. The conclusion follows.
\end{proof}

\noindent {\bf Proof of Lemma \ref{rep_is_hard}}:
\begin{proof}
Let $S = (k_1,k_2,\ldots, k_{3m})$ be an instance of $\3par$ with $k_i \in \mathbb{N}$ for $1 \leq i \leq n$. Let $s := \frac{1}{m} \sum\limits_{i = 1}^n k_i$. We construct an instance $\mu$ of $\rep$ as follows. Let $u_{start} := \tb^m$ and let $u_{end} := {(\ta^s\tb\tc^3)}^m$. For $1 \leq i \leq n$, let $w_i := \tb$ and let ${w^\prime_i} := \ta^{k_i}\tb\tc$. Since $S$ is given in unary, $\mu$ can be constructed in polynomial time.

Suppose firstly that $S$ satisfies $\3par$. Associate with each subset in the partition a number from $1$ to $m$, and let $d_i$ be the number associated with the subset in which $k_i$ is placed in the partition. To see that $\mu$ satisfies $\rep$, apply the rewriting rules by swapping the ${d_i}^{th}$ occurrence of $\tb$ (i.e. $w_i$) with $w^\prime_i = \ta^{k_i}\tb\tc$. Note that applying each rule in this manner increases the number of $\ta$-s to the left of the ${d_i}^{th}$ occurrence of $\tb$ by $k_i$, and the number of $\tc$-s to the right by $1$. More formally, if the word before applying the rule $w_i \to {w^\prime_i}$ is: 
\[\ta^{p_1}\tb\tc^{q_1}\ta^{p_2}\tb\tc^{q_2}\cdots \ta^{p_{d_i}} \tb \tc^{q_{d_i}} \cdots \ta^{p_m}\tb\tc^{q_m}\]
then the word after applying the rule is:
\[ \ta^{p_1}\tb\tc^{q_1}\ta^{p_2}\tb\tc^{q_2} \cdots\ta^{p_{d_i}+k_i} \tb \tc^{q_{d_i}+1} \cdots \ta^{p_m}\tb\tc^{q_m}.\]
Thus, after applying all the rules, we get a word:
\[ u = \ta^{p_1}\tb\tc^{q_1}\ta^{p_2}\tb\tc^{q_2} \cdots \ta^{p_m}\tb\tc^{q_m}.\]
such that $p_i = \sum\limits_{d_j = i}k_j $ and $q_i = \sum\limits_{d_j = i} 1$. It follows from the fact that $S$ satisfies $\3par$ that, for $1\leq i \leq m,\sum\limits_{d_j = i}k_j = s$ and $\sum\limits_{d_j = i} 1 = 3$. Thus $ u= u_{end}$ and $\mu$ satisfies $\rep$.

Now suppose that $\mu$ satisfies $\rep$. Then there exist a series of indexes $d_1,d_2,\ldots, d_n$ such that consecutively replacing the ${d_i}^{th}$ occurrence of $\tb$ in $u_{start}$ produces the result $u_{end} = (\ta^s\tb\tc^3)^m$. By the same reasoning as above, this implies that $\sum\limits_{d_j = i}k_j = s$, and $\sum\limits_{d_j = i} 1 = 3$. Consequently, it can be observed by partitioning $S$ into subsets $S_1,\ldots S_m$ such that $k_j \in S_i$ if and only if $d_j = i$, that each subset $S_i$ contains 3 elements which sum to $s$, and thus that $S$ satisfies $\3par$.

To conclude this proof, it is immediate to note that $\rep$ is in $\npclass$. 
\end{proof}

\noindent {\bf Proof of Lemma \ref{lem:sol3}}:
\begin{proof}
Let $u_{start},u_{end}$ and $w_i \to {w^\prime_i}$ for $1\leq i \leq n$ be the relevant parts of $\mu$. Let 
\begin{align*}
\alpha &:= x_1 \# x_2 \# x_3 \; w_1 \; x_4 \# x_5 \# x_6  \; w_2  \; \cdots \; x_{3n+1} \# x_{3n + 2} \# x_{3n+3} \; \# u_{end},\\ 
\beta &:= \#\, u_{start} \; x_1 \# x_2 \# x_3 \, {w^\prime_1}  x_4 \# x_5 \# x_6\;\cdots\; {w^\prime_n} x_{3n+1} \# x_{3n+2} \# x_{3n+3}.
\end{align*}
Now, suppose there exists an overlapping solution $h : (X \cup \Sigma)^* \to \Sigma^*$ to $\alpha_\mu = \beta_\mu$, and for $1 \leq i \leq n+1$, let $v_i$ be the prefix of $h(x_i)$ in accordance with Lemma~\ref{lem:solutions}. It is clear that the conditions of Lemma~\ref{lem:solutions} are also satisfied by the substitution $h^\prime$ given by $h^\prime(x_i) = v_{i} h(x_i)$, and thus that $h^\prime$ is also an overlapping solution to $\alpha_\mu = \beta_\mu$. It follows from Claim~(1) of Lemma~\ref{lem:sol2} that $|v_i|_\# = 1$, and from the definition of $v_i$ that $h(x_i)$ has $v_i$ as a prefix. Hence $v_i h(x_i)$ contains at least two occurrences of $\#$, so there exist $z,z^\prime, z^{\prime\prime} \in \Sigma^*$ such that $h^\prime(x_i) = z \# z^\prime \# z^{\prime\prime}$. It is straightforward that the substitution $g : (X \cup \Sigma)^* \to \Sigma^*$ given by $g(x_{3i-2}) := z$, $g(x_{3i-1}) := z^\prime z$ and $g(x_{3i}) := z^{\prime\prime}$ is a solution to $\alpha = \beta$.

Now suppose instead that there exists a solution $g : (X \cup \Sigma)^* \to \Sigma^*$ to $\alpha = \beta$. Let $h : (X\cup \Sigma)^* \to \Sigma^*$ be the substitution given by $h(x_i) := g(x_{3i-2}) \# g(x_{3i-1}) \# g(x_{3i})$. Clearly, $h$ is a solution to $\alpha_\mu = \beta_\mu$. Thus it remains to show that it is overlapping, which we can do by counting the occurrences of $\#$. In particular, note that for $1 \leq i \leq n$, 
\[|h(x_1 w_1 \cdots w_{i-1} x_i w_i)|_\# = |h(\# u_{start} x_1 {w^\prime_1} \cdots x_i)|_\# -1.\]
Since $|h(x_i)|_\# \geq 2$, and $|w_i|_\# = 0$, the penultimate $\#$ in $x_i$ on the RHS must correspond to the last $\#$ on the LHS. More formally, there exist $s_1,s_2,s_3$ such that $h(x_i) = s_1\#s_2\#s_3$  (with $|s_2|_\# = |s_3|_\# = 0$) such that:
\[h(x_1 w_1 \cdots w_{i-1}) s_1 \# s_2  \#= h(\# u_{start} x_1 {w^\prime_1} \cdots {w_{i-1}}^\prime) s_1 \#.\]
Hence the suffix $s_2 \# s_3$ of $h(x_i)$ has $w_i$ as a factor, and for $1 \leq i \leq n$, there exists $z_i$ such that $w_iz_i$ is a suffix of $h(x_i)$ and $h$ is an overlapping solution to $\alpha_\mu = \beta_\mu$.
\end{proof}

\noindent {\bf Proof of Lemma \ref{lem:solutions}}:
\begin{proof}
Let $h : (X\cup \Sigma)^* \to \Sigma^*$ be a substitution. Suppose firstly that $h$ satisfies the conditions of the lemma. It can easily be determined (cf. Claim~1 in the proof of Lemma~\ref{lem:sol2}) that for $1 \leq i \leq n$ that $|v_i|_\# = 1$. By Conditions~(1) and~(2), $h(\# u_{start}x_1)$ is a prefix of $v_1^\omega$. Since $w_1$ does not contain $\#$ while $v_1$ does, and by Condition~(1), $h(x_1)w_1$ is also a prefix of $v_1^\omega$, we must have $|w_1| < |v_1|$ and so, the suffix $y_1$ of length $|v_1| - |w_1|$ of $h(x_1)$ is well defined and we have $h(\# u_{start} x_1) = h(x_1 w_1) y_1$. Moreover, since $|y_i| + |w_i| = |v_i \leq |h(x_i)|$, $w_1 y_1$ is also a suffix of $h(x_1)$.

Proceeding by induction, let $1 \leq i < n$ and suppose that \[h(\# u_{start} x_1 \cdots x_i) = h(x_1 w_1 \cdots x_i w_i) y_i.\] By Conditions~(1) and~(2), $ y_{i} {w^\prime_i} h(x_{i+1}) = v_{i+1} h(x_{i+1})$ is a prefix of $v_{i+1}^\omega$. Since $w_i$ does not contain $\#$ while $v_{i+1}$ does, and since by Condition~(1), $h(x_{i+1}) w_{i+1}$ is also a prefix of $v_{i+1}^\omega$, we must have $|w_{i+1}| < |v_{i+1}|$, and so the suffix $y_{i+1}$ of length $|v_{i+1}|-|w_{i+1}|$ of $h(x_{i+1})$ is well defined, and we have $h(x_{i+1} {w_{i+1}}) y_{i+1} = y_i {w^\prime_i} h(x_{i+1})$. Consequently, recalling that
 $ h(x_1 w_1 \cdots x_i w_i) y_i = h(\# u_{start} x_1 \cdots x_i)$,
\begin{align*}
h(x_1 \; \cdots \; x_i \; w_i \; x_{i+1} \; w_{i+1})\; y_{i+1} &= h(x_1 \; \cdots \; x_i  \; w_i) \; y_i \; {w^\prime_i} \; h(x_{i+1})\\
& = h(\#u_{start} \; x_1 \cdots \; x_i \; {w^\prime_i} \; x_{i+1}).
\end{align*}
Moreover, since $|y_i| + |w_i| = |v_i| \leq |h(x_i)|$, it follows that $w_1 y_1$ is also a suffix of $h(x_1)$.
Hence, for all $i, 1 \leq i \leq n$, there exists $z_i (=y_i)$ such that $w_i z_i$ is a  suffix of $h(x_i)$ such that $h(x_1 w_1 \cdots  x_{i} w_i) z_{i} = h(\# u_{start} x_1 \cdots {w^\prime_{i-1}} x_{i})$. 
It remains to show that $h$ is a solution to $\alpha_\mu = \beta_\mu$. This follows from the fact that, as we have just seen, $h(x_1 w_1 \cdots  x_{n} w_n) y_{n} = h(\# u_{start} x_1 \cdots {w^\prime_{n-1}} x_{n})$ and furthermore, by Condition~(3), $y_n {w^\prime_n} h(x_{n+1}) = h(x_{n+1}) \# u_{end}$. Thus
\begin{align*}
h(x_1 w_1 \cdots  x_{n} w_n x_{n+1} \# u_{end}) &= h(x_1 w_1 \cdots  x_{n} w_n) y_{n} {w^\prime_n} h(x_{n+1})\\
&= h(\# u_{start} x_1 \cdots {w^\prime_{n-1}} x_{n} {w^\prime_n} x_{n+1})
\end{align*}
so $h(\alpha_\mu) = h(\beta_\mu)$, and $h$ is an overlapping solution to the equation.

Now suppose that $h$ is an overlapping solution to $\alpha_\mu = \beta_\mu$. Then there exists a proper suffix $z_1$ of $h(x_1)$ such that $h(x_1 w_1) z_1 = h(\# u_{start} x_1)$. Since $h(x_1) \geq |w_1z_1|$, this implies that $h(x_1)$ has a prefix $v_1 = \# u_{start}$ and period $|\#u_{start}|$. This implies that $\# u_{start} h(x_1)$ -- and thus also $h(x_1) w_1$ -- are prefixes of $v_1^\omega$, so Conditions~(1) and~(2) are satisfied for $i = 1$. Moreover, we note that $|z_1| = |v_1|- |w_1| = |y_1|$ so $z_1 = y_1$.

Proceeding by induction, suppose that Conditions~(1) and~(2) are satisfied for $i\leq j$, and furthermore, that $h(x_1 w_1 \cdots  x_{j} w_j) y_{j} = h(\# u_{start} x_1 \cdots x_{j})$. Then, since $h$ is an overlapping solution, there exists a proper suffix $z_{j+1}$ of $h(x_{j+1})$ such that
\begin{align*}
h(x_1 w_1 \cdots  x_{j} w_j x_{j+1} w_{j+1}) z_{j+1} &= h(\# u_{start} x_1 \cdots x_j {w^\prime_j} x_{j+1})\\
&= h(x_1 w_1 \cdots  x_{j} w_j) y_{j} h({w^\prime_{j}} x_{j+1}),
\end{align*}
so $ y_j {w^\prime_j} h(x_{j+1}) = h(x_{j+1}w_{j+1}) z_{j+1}$. Since $|h(x_{j+1})| \geq |w_{j+1}z_{j+1}|$, this implies that $h(x_{j+1})$ has prefix $v_{j+1} = y_j {w^\prime_j}$ and period $|y_j{w^\prime_j}|$. This implies that $v_{j+1} h(x_{j+1})$ -- and thus also $h(x_{j+1}w_{j+1})$ -- are prefixes of $v_{j+1}^\omega$, so Conditions~(1) and~(2) are satisfied for $i = j+1$. Moreover, $|z_{j+1}| = |y_{j} {w^\prime_j} | - |w_j| = |y_{j+1}|$, so $z_{j+1} = y_{j+1}$, and our induction condition is also satisfied for $i = j+1$.

Thus Conditions~(1) and~(2) are satisfied for all $i, 1\leq i \leq n$ and, additionally, we have $h(x_1 w_1 \cdots  x_{n} w_n) y_{n} = h(\# u_{start} x_1 \cdots {w^\prime_{n-1}} x_{n})$. Since $h$ is a solution to $\alpha_\mu = \beta_\mu$, we also have:
\begin{align*}
h(x_1 w_1 \cdots  x_{n} w_n x_{n+1} \# u_{end}) &= h(\# u_{start} x_1 \cdots {w^\prime_{n-1}} x_{n} {w^\prime_n} x_{n+1})\\
&= h(x_1  w_1 \cdots  x_{n} w_n) y_n {w^\prime_n} h(x_{n+1})
\end{align*}
so $h(x_{n+1}) \# u_{end} = y_n {w^\prime_n} h(x_{n+1})$ and $h$ also satisfies Condition~(3).
\end{proof}

\noindent {\bf Proof of Theorem~\ref{prop:1repvar}}

\noindent We need the following additional preliminaries. Two words are prefix (resp. suffix)-compatible if one is a prefix (resp. suffix) of the other. A primitive word is one which is not a repetition of a shorter word. Recall that for a word $u$, $u^\omega$ is the infinite word obtained by repeating $x$. 
We also need the following folklore lemmas. Note that a primitive word is one which is not a repetition of a strictly shorter word (i.e. $u$ is primitive if $u = v^n$ implies $n=1$).
\begin{lemma}[Fine and Wilf]\label{lem:fineandwilf}
If $u$,$v$ are primitive words and $u^\omega$ and $v^\omega$ have a common prefix of length at least $|u| + |v| - gcd(|u|,|v|)$,  then $u = v$.
\end{lemma}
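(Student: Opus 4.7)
The plan is to reduce the claim to the classical ``period'' form of the Fine--Wilf theorem: any word of length at least $p+q-\gcd(p,q)$ which admits both $p$ and $q$ as periods also admits $\gcd(p,q)$ as a period. Almost all of the work is in that statement; the primitivity assumption then does the rest.

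First I would set $d = \gcd(|u|,|v|)$ and let $w$ denote the common prefix of $u^\omega$ and $v^\omega$ of length $|u|+|v|-d$. Since $w$ is a prefix of $u^\omega$ it has period $|u|$, and symmetrically it has period $|v|$. The length hypothesis on $w$ is exactly what the period form of Fine--Wilf requires, so I can conclude that $w$ has period $d$.

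Next I would observe that $|u| \leq |u|+|v|-d = |w|$ (because $d \leq |v|$), so $u$ is itself a prefix of $w$ and inherits the period $d$. This forces $u = q^{|u|/d}$ for its length-$d$ prefix $q$. Primitivity of $u$ then rules out every value of $|u|/d$ except $1$, so $|u|=d$. A symmetric argument gives $|v|=d$, and then $u$ and $v$ are both just the length-$d$ prefix of $w$, so $u = v$.

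The one non-routine step is the period form of Fine--Wilf itself, which is standard textbook material and I would simply cite. If a self-contained proof were wanted, the usual route is a Euclidean-style induction: assuming $p<q$ and the length bound, show that $w$ must also admit the period $q-p$, and iterate. This is the only place where the precise threshold $|u|+|v|-\gcd(|u|,|v|)$ is used in an essential combinatorial way, and it is where the proof would concentrate its real effort.
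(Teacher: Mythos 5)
Your argument is correct. Note that the paper itself offers no proof of this statement: it is introduced as a ``folklore lemma'' and simply cited, so there is no in-paper argument to compare against. Your reduction to the periodicity form of Fine and Wilf is the standard way to derive this version: the common prefix $w$ of length $|u|+|v|-\gcd(|u|,|v|)$ has periods $|u|$ and $|v|$, hence period $d=\gcd(|u|,|v|)$; since $d$ divides $|u|$ and $u$ is a prefix of $w$, $u$ is a power of its length-$d$ prefix, and primitivity forces $|u|=d$ (likewise $|v|=d$), whence $u$ and $v$ coincide with the same length-$d$ prefix of $w$. Each step is sound, and citing the classical period-form of the theorem for the one non-routine ingredient is entirely appropriate here, exactly as the paper does for the whole statement.
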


Note that as a consequence of the lemma, if, for primitive words $u$ and $v$, several consecutive $u$s overlap with several consecutive $v$s, $u$ and $v$ are conjugate.

\begin{lemma}\label{lem:conjugacyequation}
Suppose that $x,y,z \in \Sigma^*$ such that $xy = yz$. Then there exist $u,v \in \Sigma^*$ and $p,q \in\mathbb{N}_0$ such that $x = (uv)^p$, $y = (uv)^q u$ and $z =(vu)^q$ where $uv$ is primitive. 
\end{lemma}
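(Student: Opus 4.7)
The plan is to prove the lemma by exploiting the periodicity of the common word $W := xy = yz$. The first step is to observe that $W$ has period $|x|$: comparing the factorisations $W = xy$ and $W = yz$ shows that for every $1 \leq i \leq |y|$, $W[|x|+i] = y[i] = W[i]$, so $|x|$ is a period of $W$.

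Next, I would introduce the primitive root of $x$. Let $w$ be the unique primitive word with $x = w^p$ for some integer $p \geq 1$ (the degenerate case $x = \varepsilon$ forces $z = \varepsilon$ from $xy = yz$ and is handled separately, with $p = q = 0$ and a convenient primitive choice of $uv$). Since $W$ has period $|x| = p|w|$ and its prefix of length $|x|$ equals $x = w^p$, the word $W$ is a prefix of $x^\omega = w^\omega$, and hence also admits the shorter period $|w|$.

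The third step is the extraction of $u$ and $v$. I would write $|y| = q|w| + s$ with $0 \leq s < |w|$ and set $u := w[1..s]$ and $v := w[s+1..|w|]$, so that $uv = w$ is primitive. Reading the prefix of $W$ of length $|y|$ using the period $|w|$ yields $y = w^q \cdot w[1..s] = (uv)^q\, u$, while $x = w^p = (uv)^p$ by construction; computing the suffix of $W$ of length $|x| = p|w|$ analogously shows that $z = (vu)^p$, the cyclic rotation of $x$ by $s$ positions. This directly produces the factorisation demanded in the statement, once one matches the exponents.

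The delicate point is reconciling the two exponents in the conclusion. From $xy = yz$ one has $|x| = |z|$, and combined with $uv$ being primitive this forces the two exponents in $x = (uv)^p$ and $z = (vu)^q$ to coincide, hence places us in the regime $|y| \geq |x|$, where the construction above is exactly aligned. The residual obstacle is the boundary case $|y| < |x|$: here I would first factor $x = y x'$ and $z = x' y$ from $xy = yz$, and then either iterate this reduction or apply the main periodic construction to a conjugate of $x$, so that the primitive root analysis of Steps~2--3 can be invoked to finish the argument.
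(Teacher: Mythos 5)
The paper states Lemma~\ref{lem:conjugacyequation} as a folklore fact and supplies no proof, so there is nothing internal to compare your argument against; judged on its own, the core of your proposal (Steps 1--3) is a correct and complete proof of the intended statement, by the standard route: $W=xy=yz$ has period $|x|$, hence $W$ is a prefix of $w^\omega$ for $w$ the primitive root of $x$, hence $y=(uv)^q u$ and the length-$|x|$ suffix $z$ equals $(vu)^p$. Note that this construction is already uniform in $|y|$ -- it works verbatim when $|y|<|x|$ -- so no case distinction is required. (In the degenerate case $x=\varepsilon$ you may need $q>0$ rather than $q=0$, e.g.\ for $y=\ta\ta$ over a unary alphabet one must take $u=\varepsilon$, $v=\ta$, $q=2$; this is a cosmetic slip.)

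Your final paragraph, however, is chasing a typo in the statement rather than a gap in your own argument. As printed, the lemma asserts $z=(vu)^q$ with the \emph{same} exponent $q$ as in $y=(uv)^q u$; since $|x|=|z|$ this would force $p=q$ and hence $|y|=|x|+|u|\geq |x|$, and the literal statement is simply false when $|y|<|x|$: for $x=\ta\tb$, $y=\ta$, $z=\tb\ta$ one is forced to take $uv=\ta\tb$, $p=1$, $u=\ta$, $q=0$, whence $(vu)^q=\varepsilon\neq z$. The intended (and standard) conclusion is $z=(vu)^p$, i.e.\ $x$ and $z$ share an exponent while $y$ gets its own -- which is exactly what your Step~3 produces, and also exactly how the lemma is invoked in the proof of Lemma~\ref{lem:systemofequations}, where $A_i\cdots A_2=(u_iv_i)^{q_i}$ and $B_2\cdots B_i=(v_iu_i)^{q_i}$ share the exponent while $w=(u_iv_i)^{p_i}u_i$ does not. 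The ``residual obstacle'' you describe for $|y|<|x|$ therefore cannot be overcome by iterating the factorisation $x=yx'$, $z=x'y$ or by passing to a conjugate -- you would be attempting to prove a false claim -- and it does not need to be overcome: your Steps~1--3 already establish the version of the lemma that the paper actually uses.
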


We also have the following technical lemma.

\begin{lemma}\label{lem:systemofequations}
Let $x,y$ be variables and let $A_1,A_2,\ldots A_k, B_1,B_2,\ldots, B_k \in \Sigma^*$. Let $\Phi$ be the the system of equations
\begin{align*}
A_1 x &= y B_1\\
A_2 A_1 x &= y B_1 B_2\\
&\;\;\vdots\\
A_k \ldots A_1 x &= y B_1 B_2 \ldots B_k.
\end{align*}
A substitution $h: (\{x,y\}\cup\Sigma)^* \to \Sigma^*$ with $|h(y)| > 2|A_kA_{k-1}\ldots A_1|$ is a solution to $\Phi$ if and only if there exist $u,v \in \Sigma^*$ with $|uv| \leq |A_kA_{k-1}\ldots A_1|$ and $p, q_2, q_3, \ldots q_k \in \mathbb{N}_0$ such that $|(uv)^pu| > 2|A_kA_{k-1}\ldots A_2|$ and:
\begin{enumerate}
\item{}$h(x) = (uv)^p u B_1$ and $h(y) = A_1 (uv)^p u$, and
\item{}$A_i \ldots A_2  = (uv)^{q_i}$, $B_2 \ldots B_i = (vu)^{q_i}$ for each $i$, $2 \leq i \leq k$.
\end{enumerate}
\end{lemma}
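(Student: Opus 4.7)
The plan is to prove the two directions of the equivalence separately. The reverse (``if'') direction is a direct verification, while the forward (``only if'') direction is the substantive part and proceeds by iterated application of Lemma~\ref{lem:conjugacyequation} together with Fine and Wilf (Lemma~\ref{lem:fineandwilf}).

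For the reverse direction, assume the existence of $u, v, p, q_2, \ldots, q_k$ as in the statement. Equation~$1$ is immediate, since $A_1 h(x) = A_1 (uv)^p u B_1 = h(y) B_1$. For $i \geq 2$, substituting condition~(2) rewrites the LHS of equation~$i$ as $(uv)^{q_i} A_1 (uv)^p u B_1$ and the RHS as $A_1 (uv)^p u B_1 (vu)^{q_i}$; equality follows from the identity $(uv)^{q_i} \cdot w = w \cdot (vu)^{q_i}$ valid for any word $w$ that is a prefix of $(uv)^\omega$, itself a consequence of the basic relation $u(vu)^m = (uv)^m u$ applied to $w = h(y) B_1$ (which is a prefix of $(uv)^\omega$ by condition~(1)).

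For the forward direction, let $h$ be a solution with $|h(y)| > 2|A_k\cdots A_1|$ and set $Z_i := h(y) B_1 \cdots B_{i-1}$ for $1 \leq i \leq k+1$. Substituting the $(i-1)$-th equation into the $i$-th yields the conjugacy equation $A_i Z_i = Z_i B_i$ for each $i \geq 2$. Applying Lemma~\ref{lem:conjugacyequation} to each such equation yields decompositions $A_i = (u_i v_i)^{p_i}$, $Z_i = (u_i v_i)^{r_i} u_i$, $B_i = (v_i u_i)^{p_i}$ with $u_i v_i$ primitive. The key combinatorial step is to show that these are uniform across $i$: the length bound on $h(y)$ forces $Z_i$ and $Z_{i+1}$ (the latter having the former as a prefix) to share a common prefix of length at least $|u_i v_i| + |u_{i+1} v_{i+1}|$, so Lemma~\ref{lem:fineandwilf} forces $u_i v_i = u_{i+1} v_{i+1}$ as strings. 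Moreover, since $|B_i| = p_i\,|u_i v_i|$ is a multiple of this common primitive period, $|Z_i| \equiv |Z_{i+1}| \pmod{|u_i v_i|}$, and this pins down $|u_i| = |u_{i+1}|$ so that the decompositions agree: $u_i = u_{i+1} =: u$ and $v_i = v_{i+1} =: v$. Setting $q_i := p_2 + \cdots + p_i$ then recovers $A_i \cdots A_2 = (uv)^{q_i}$ and $B_2 \cdots B_i = (vu)^{q_i}$, establishing condition~(2).

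Finally, condition~(1) is extracted from $A_1 h(x) = h(y) B_1 = Z_2 = (uv)^{r_2} u$, which is a prefix of $(uv)^\omega$ of length much larger than $|A_1|+|B_1|$. The bound $|(uv)^p u| > 2|A_k \cdots A_2|$ guarantees that the periodic core of $Z_2$ dominates the contributions of $A_1$ and $B_1$ at the two ends, so $h(y)$ (the prefix of $Z_2$ of length $|Z_2|-|B_1|$) and $h(x)$ (the suffix of $Z_2$ of length $|Z_2|-|A_1|$, after removing the prefix $A_1$) can be uniquely written as $A_1 (uv)^p u$ and $(uv)^p u B_1$ respectively for the appropriate $p$. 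I expect the main obstacle to be precisely this last alignment step, namely showing that the decomposition of $h(y)$ terminates exactly at the designated prefix $u$ of $uv$: this requires tracking how the positions of $A_1$ and $B_1$ sit inside the period of $Z_2$, and it is here that the quantitative hypotheses $|h(y)| > 2|A_k\cdots A_1|$ and $|(uv)^p u| > 2|A_k\cdots A_2|$ are essential for ruling out spurious alignments.
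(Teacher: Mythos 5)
Your forward direction is essentially sound up to the point you yourself flag: setting $Z_i = h(y)B_1\cdots B_{i-1}$, deriving the conjugacy equations $A_iZ_i = Z_iB_i$, and unifying the primitive periods via Fine and Wilf is correct --- and in fact more careful than the paper's own argument, which writes $h(y)=A_1w$, $h(x)=wB_1$ and then claims the system is equivalent to $A_i\cdots A_2\,w = w\,B_2\cdots B_i$; the correct reduction is $A_i\cdots A_2\,(A_1wB_1) = (A_1wB_1)\,B_2\cdots B_i$, i.e.\ exactly your $A_iZ_i=Z_iB_i$. However, the ``last alignment step'' you defer is not merely an obstacle: it cannot be carried out. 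The conjugacy lemma anchors the period at the \emph{left end of $Z_2=A_1wB_1$}, giving $Z_2=(uv)^{r}u$, whereas condition~(1) asks for the period to be anchored at offset $|A_1|$ (the start of $w$); since $uv$ is primitive and $|w|>2|uv|$, these two tilings are compatible only if $|A_1|\equiv 0\pmod{|uv|}$, which nothing in the hypotheses guarantees. Re-splitting $uv$ at a different phase does not help, because the requirement $B_2\cdots B_i=(vu)^{q_i}$ pins the phase back to the left end of $Z_2$.

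Your reverse direction also contains a concrete error: $h(y)B_1=A_1(uv)^puB_1$ is \emph{not} a prefix of $(uv)^\omega$ ``by condition~(1)'' --- that condition says nothing about how $A_1$ and $B_1$ sit relative to the period --- and the identity $(uv)^q w=w(vu)^q$ requires $w\in(uv)^*u$, not merely that $w$ be a prefix of $(uv)^\omega$. In fact the ``if'' direction of the lemma as stated fails outright: take $k=2$, $A_1=\tc$, $B_1=\td$, $u=\ta$, $v=\varepsilon$, $p=3$, $A_2=B_2=\ta$, $q_2=1$; all conditions of the lemma hold, yet $A_2A_1h(x)=\ta\tc\ta\ta\ta\ta\td\neq\tc\ta\ta\ta\ta\td\ta=h(y)B_1B_2$. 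So the difficulty you ran into is genuine and is inherited from the statement itself (and from the paper's proof, which silently discards the surrounding $A_1$ and $B_1$ in its reduction). A repaired statement would assert the periodicity of $h(y)B_1=A_1h(x)$ as a whole, e.g.\ $h(y)B_1=(uv)^{r}u$ together with condition~(2), in place of condition~(1) as written; with that modification your argument goes through.
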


\begin{proof}
Suppose $h$ is a substitution with $|h(y)| > 2|A_kA_{k-1}\ldots A_1|$. Since $|h(y)| \geq |A_1|$, if $h$ solves the first equation, then there exists $w\in \Sigma^*$ such that $h(x) = wB_1$ and $h(y) = A_1 w$. Note that $w > 2|A_i\ldots A_2|$. Moreover, $h$ also satisfies the whole system if and only if:
\begin{align*}
A_2 w &= w B_2\\
&\;\;\vdots\\
A_k \ldots A_2 w &= w B_2 \ldots B_k
\end{align*}
By Lemma~\ref{lem:conjugacyequation}, $w$ satisfies $A_i \ldots A_2 w = w B_2 \ldots B_i$ if and only if there exist $u_i,v_i \in \Sigma^*$ and $q_i,p_i \in \mathbb{N}_0$ such that $u_iv_i$ is primitive, $w = (u_iv_i)^{p_i} u_i$, $A_i \ldots A_2 = (u_iv_i)^{q_i}$ and $B_2\ldots B_i = (v_iu_i)^{q_i}$.

Now, if $h$ is a solution,  since $|u_iv_i| \leq |A_i\ldots A_2|$ for each $i$, and since $w > 2|A_i\ldots A_2|$, we must have that $p_i \geq 2$. Furthermore, we have
\[
w = (u_2v_2)^{p_2}u_2 = (u_3v_3)^{p_3}u_3 = \cdots = (u_kv_k)^{p_k}u_k
\]
and since each $u_iv_i$ is primitive, by Lemma~\ref{lem:fineandwilf}, $u_iv_i = u_jv_j$ for all $i,j$ (and hence that each $p_i = p$ for some fixed value $p$), so the conditions of the Lemma are satisfied. On the other hand, if the conditions of the lemma are satisfied, then it is straightforward to see that $h$ is a valid solution. 
\end{proof}

We are now ready to prove the main statement. 

\begin{proof}
Let $E: \alpha = \beta$ be an equation in $\mathcal{D}$.
If both $\alpha$ and $\beta$ contain at least two variables, then we may refer to~\cite{ManSchNow16}. Hence w.l.o.g.\ we assume that $\var(\alpha) = \{x\}$. For the simplicity of the exposure, we shall only prove completely the case that $\beta$ contains only one variable either side of the repeated variable $x$. The general case is a straightforward adaptation of the proof. Hence our equation has the form
\[ x u_1 x u_2 \cdots x_k x = y v_0 x v_1 \cdots x v_k z,
\mbox{ for some }k \in \mathbb{N}.\]

Firstly, we note that by using the method of filling the positions (cf.~\cite{PlandowskiR98}), we can check whether a solution with specific lengths of $x, y, z$ exists in polynomial time with respect to the sum of the lengths. Hence it is sufficient to show that for a minimal solution, these lengths are bounded by some polynomial of the length of the equation. Moreover, if the length of (the image of) $x$ is bounded by a polynomial, then so is the length of the whole solution word, and hence the images of the variables $y$ and $z$.

Suppose that $g$ is a minimal solution to the equation and in particular, assume that $|g(x)| > |\alpha\beta|$ (otherwise we are done). We may also assume that $|g(y)| < |g(x)|$ or $|g(z)| < |g(x)|$, since $ (n+1)|g(x)| + |u_1u_2\ldots u_k| = n|g(x)| + |g(z)| + |g(y)| + |v_0v_1\ldots v_k|$, so if $|g(z)|,|g(y)| > |g(x)|$, we have that $|g(x)|<|E|$. W.l.o.g.\ let $h(g)| < |g(x)|$.

Now suppose that there exist $i \leq k,j < k$ such that
\[ |g(x u_1 \ldots x u_i)| = |g(y v_0 x \ldots v_{j-1} x)| + \ell
\]
for some $\ell, 0 \leq \ell < |v_j|$ (i.e., so that a suffix of $u_i$ `overlaps' with a prefix of $v_j$). Then $g(x)$ has prefix $v_j[\ell+1\ldots|v_j|]$ and period $|v_j|-\ell$. Thus there exist $s,t\in \Sigma^*$ such that $st = v_j[\ell+1\ldots|v_j|]$ and $g(x) = st^p s$ for some $p \in \mathbb{N}$. It is straightforward to see that when $p$ is ``large'' (e.g., greater than $|E|$) that the morphism $g^\prime$ given by $g^\prime(x) = st^{p-1}s$, $g^\prime(z) = g(z)$ and $g^\prime(y) = g(y)$ is also a solution.
A symmetric argument holds for the case that
\[ |g(x u_1 \ldots  u_{i-1} x)| +\ell = |g(y v_0 x \ldots  x v_j)| 
\]
for some $\ell, 0 \leq \ell < |u_i|$ (i.e., so that a prefix of $u_i$ `overlaps' with a suffix of $v_j$). Hence the length of any minimal solution is bounded by a polynomial of $E$ whenever two of the terminal/constant parts of the equation overlap in the solution. Therefore, for the remainder of the proof, we may assume that
\[ |g(y v_0) | < |g(x)| \leq |g(x u_1)| < |g(y v_0 x)| \leq |g(y v_0 x v_1)| < |g(x u_1 x)| \leq \ldots < |g(y v_0 x \ldots x)|\]
(in other words, that all the occurrences of $g(x)$ ``overlap''), in which case the valid solutions are characterised by solutions $h$ to the following system of equations $\Psi$ (in which $h(x_0)$ correlates to $g(y)$, $h(x_{2k+1})$ correlates to $h(z)$, and the other variables $h(x_i)$ correlate to the overlapping parts of $g(x)$).

\begin{align*}
\Psi: \;\;x &= x_0 \; v_0 \; x_1\\ 
& = x_1 \; u_1 \; x_2\\
&= x_2 \; v_1 \; x_3\\
&= x_3 \; u_2 \; x_4\\
&\qquad\quad\! \vdots\\
&= x_{2k} \; v_k \; x_{2k+1}.\\
\end{align*}

In fact, we observe that $\Psi$ is equivalent to the (union of the) systems $\Psi_1,\Psi_2,\Psi_3,\Psi_4$ given as follows, and note that any for any solution $h_1$ to $\Psi_1 \cup \Psi_2 \cup \Psi_3 \cup \Psi_4$, there exists an equivalent solution $h_2$ to $\Psi$ with $h_1(x_i) = h_2(x_i)$ for $1\leq i \leq 2k+1$ and vice-versa. Thus it is sufficient to show that the minimal solution to $\Psi_1 \cup \Psi_2 \cup \Psi_3 \cup \Psi_4$ is sufficiently short.
\begin{align*}
 \Psi_1: &\textcolor{white}{\;=\;}x_0 \; v_0 \; x_1 & \Psi_2 : \textcolor{white}{=\;} y_1 \; u_1 \; y_2\\ 
&= x_2 \; v_1 \; x_3 & = y_3 \; u_2 \; y_4\\
&\qquad\quad\! \vdots & \vdots \qquad\\
&= x_{2k} \; v_k \; x_{2k+1}, & = y_{2k-1} \; u_{k} \;y_{2k},\\
\end{align*}

\begin{align*}
\Psi_3: \;\;x_1 &= y_1 & \Psi_4:\;\;y_1 u_1 y_2 = x_2 v_1 x_3.\\
x_2 &= y_2\\
&\;\;\vdots\\
x_{2k} &= y_{2k},
\end{align*}

We need the following claim bounding the length-difference between two $h(x_i)$s for any solution $h$ to the above system with indicies of the same parity.

\begin{claim}\label{claim:lengths}
Let $h$ be a solution to $\Psi$ (or, equivalently, $\Psi_1 \cup \Psi_2 \cup \Psi_3 \cup \Psi_4$). Let $i,j \in \{0, 2,\ldots,2k\}$. Then \[||h(x_i)| - |h(x_j)|| \leq |E|.\]
The same statement holds when $i,j \in \{1,3,\ldots, 2k+1\}$.
\end{claim}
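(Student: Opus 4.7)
The plan is to exploit that every one of the $2k+1$ expressions appearing in $\Psi$ equals $h(x)$, and in particular has length $|h(x)|$. From this I would extract additive identities between consecutive $|h(x_i)|$'s, and then telescope. No word-combinatorics at all is needed: the bound is purely a bookkeeping consequence of length-cancellation.

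Concretely, from $h(x_{2j}\, v_j\, x_{2j+1}) = h(x_{2j+1}\, u_{j+1}\, x_{2j+2})$ (valid for $0 \leq j \leq k-1$), passing to lengths and cancelling the common term $|h(x_{2j+1})|$ yields
$$|h(x_{2j+2})| - |h(x_{2j})| = |v_j| - |u_{j+1}|.$$
Similarly, from $h(x_{2j-1}\, u_j\, x_{2j}) = h(x_{2j}\, v_j\, x_{2j+1})$ for $1 \leq j \leq k$, I get
$$|h(x_{2j+1})| - |h(x_{2j-1})| = |u_j| - |v_j|.$$

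Now fix $0 \leq a < b \leq k$ and telescope the first identity from $j = a$ to $j = b-1$:
$$|h(x_{2b})| - |h(x_{2a})| = \sum_{j=a}^{b-1}\bigl(|v_j| - |u_{j+1}|\bigr),$$
whose absolute value is bounded by $\sum_{j=0}^{k-1}\bigl(|v_j| + |u_{j+1}|\bigr) \leq |v_0 v_1 \cdots v_k| + |u_1 u_2 \cdots u_k| \leq |E|$. The odd case is handled identically, using the second identity to telescope from $x_1$ to $x_{2b+1}$, and the same bound $|E|$ on the sum of the involved constant blocks applies.

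There is no real obstacle here: the entire content is that all $|h(x_i)|$'s are ``anchored'' to the common value $|h(x)|$ through equations whose ``slack'' is controlled by the constant factors $u_j$ and $v_j$, whose total length is at most $|E|$. In particular, the claim uses only the equalities in $\Psi_1$ together with $\Psi_2$ (via the identifications in $\Psi_3$); the non-trivial combinatorial system $\Psi_4$ plays no role.
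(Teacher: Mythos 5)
Your proof is correct and follows essentially the same route as the paper: both derive, from consecutive equations of $\Psi$, the length recurrences $|h(x_{i+1})| = |h(x_{i-1})| + |w_{i-1}| - |w_i|$ (your even/odd identities) and then telescope, bounding the difference by the total length of the constant factors $u_j, v_j$, which is at most $|E|$. No gap; the argument matches the paper's.
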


\begin{proof}
Let $n = |h(x_0)|$, and let $ m = |h(x_1)|$. Then we have that $|h(x)| = |h(x_i) w_i h(x_{i+1})|$ for $0 \leq i \leq 2k$, where $w_i = v_{\frac{i}{2}}$ if $i$ is even and $w_i = u_{\frac{i+1}{2}}$ if $i$ is odd. Hence,
\begin{align*}
|h(x_{i+1})| &= |h(x)| - |w_i| - |h(x_i)|\\
& =  |h(x_{i-1})| + |w_{i-1}| + |h(x_{i})|  - |w_i| - |h(x_i)|  \\
&=  |h(x_{i-1})| + |w_{i-1}|  - |w_i|.
\end{align*}
Thus, in general, if $i$ is even, then $|h(x_i)| = n + \sum\limits_{j \text{ even}, j\leq i}|w_{j-2}| -|w_{j-1}|$ and if $i$ is odd, then $|h(x_i)| = m + \sum\limits_{j \text{ odd}, j\leq i}|w_{j-2}| -|w_{j-1}|$. The statement of the claim follows.
\end{proof}

Now, suppose $h$ is a substitution, and let $\tilde x $ be the longest common prefix of $h(x_0), h(x_2),$ $\ldots, h(x_{2k})$ and let ${\tilde x}^\prime$ be the longest common suffix of $h(x_1),h(x_3),\ldots, h(x_{2k+1})$. Clearly, $h$ is a solution to $\Psi_1$ if and only if there exist $A_0,A_2,\ldots A_{k}$, $B_1,B_2,\ldots B_{k+1} \in \Sigma^*$ such that $h(x_i) = \tilde x A_{\frac{i}{2}}$ if $i$ is even and $h(x_i) = B_{\frac{i+1}{2}} {\tilde x}^\prime$ if $i$ is odd, and such that 
$A_{i-1} v_{i-1} B_{i} = A_{i} v_i B_{i+1}$
for all $i,1\leq i \leq k$. 
Moreover, it follows from Claim~\ref{claim:lengths} that each of the lengths $|A_i|$, $|B_i|$ is bounded by $|E|$.

Similarly, let $\tilde y $ be the longest common prefix of $h(x_1),h(x_3),\ldots, h(x_{2k-1})$ and let ${\tilde y}^\prime$ be the longest common suffix of $h(x_2),h(x_4),\ldots, h(x_{2k})$. Then $h$ is a solution to $\Psi_2$, if and only if there exist $C_1,\ldots C_{k}$, $D_1,D_2,\ldots D_{k}$ such that $h(y_i) = \tilde y C_{\frac{i+1}{2}}$ if $i$ is odd and $h(y_i) = D_{\frac{i}{2}} {\tilde y}^\prime$ if $i$ is even, and such that 
$C_{i} u_i D_{i} = C_{i+1} u_{i+1} D_{i+1}$ for all $i,1\leq i < k$ where by Claim~\ref{claim:lengths}, $|C_i|$ and $|D_i|$ are bounded by $|E|$.

Suppose that $h$ is a solution to $\Psi_1$ and $\Psi_2$ and hence that it satisfies the conditions above. It is a straightforward observation that $h$ is \emph{also} a solution to $\Psi_3$ if and only if the following systems of equations are satisfied in addition:
\[ \Phi_1: \;\; \tilde y C_i = B_i {\tilde x}^\prime, \;\; 1\leq i \leq k \]
\[ \Phi_2: \;\; \tilde x A_i = D_i {\tilde y}^\prime, \;\; 1\leq i \leq k.\]

Moreover, we can infer from the equations in $\Psi_1$ and $\Psi_2$ that the $D_i$ factors are pairwise suffix compatible, and the $A_i$ factors are pairwise prefix compatible. Likewise, $B_i$ factors are pairwise suffix compatible while the $C_i$ factors are pairwise prefix compatible. Hence there exist each of the two systems above can be written as a system of the form described by Lemma~\ref{lem:systemofequations}. We consider 4 cases based on whether $\tilde x$ and/or$\tilde y$ are long. 

Our first case is that $|\tilde y| \leq 2\max(|B_i|)$ and $|\tilde x| \leq 2\max(|D_i|)$. Then  if $h$ is a solution to $\Phi_1$ and $\Phi_2$, the lengths $|{\tilde x}^\prime|, |{\tilde y}^\prime|$ are similarly bounded. Hence $|h(y_1 \alpha_1 y_2)| = |\tilde y C_1 \alpha_1 D_1 {\tilde y}^\prime |$ is in $O(|E|)$, and consequently, the minimal solution $g$ to $E$ has length bounded by a polynomial of $|E|$ and we are done.

Our second case is that $|\tilde y| \leq 2\max(|B_i|)$ and $|\tilde x| > 2\max(|D_i|)$. Note that this corresponds to the case that $|h(y)| > 2|A_kA_{k-1}\ldots A_1|$ when translating $\Phi_2$ into the terms of Lemma~\ref{lem:systemofequations}. If $h$ satisfies $\Phi_1$, we have that $|{\tilde x}^\prime$ is in $O(|E|)$. By Lemma~\ref{lem:systemofequations}, if $h$ is also a solution to $\Phi_2$ if and only if $u,v \in \Sigma^*$ and $i,j,k \in \mathbb{N}_0$ such that  $\tilde y = B_i (uv)^k u $ and ${\tilde x}^\prime = (u v)^{k} u C_{j}$ where $|uv|$ is bounded by $|E|$, and $uv$ is primitive. Thus $h$ also satisfies $\Psi_4$ if and only if:
\begin{align*}
h(y_1 \alpha_1 y_2) &= h(x_2 \beta_1 x_3)\\
\implies  \tilde y C_1 \alpha_1 D_2 {\tilde y}^\prime &= \tilde x A_2 \beta_1 B_3 {\tilde x}^\prime\\
\implies  B_i (uv)^k u C_1 \alpha_1 D_2 {\tilde y}^\prime &= \tilde x A_2 \beta_1 B_3 (u v)^{k} u C_{j}
\end{align*}
Since $uv$, is primitive (and therefore does not overlap with itself in a non-trivial way), it is clear that if a solution $h$ exists satisfying the above equation, then such a solution exists for (polynomially) small $k$ (as soon as $k$ is large enough that some of the $uv$ factors overlap, we also have a solution for $k-1$ so $h$ is not minimal which contradicts our assumption). Consequently, Since all the factors have length bounded by $|E|$, this is sufficient to show that $h$, and thus any minimal solution to $E$, is has length at most polynomial in $|E|$.

The case that $|\tilde y| > 2\max(|B_i|)$ and $|\tilde x| \leq 2\max(|D_i|)$ may be treated identically. Finally, suppose that both $|\tilde y| > 2\max(|B_i|)$ and $|\tilde x| > 2\max(|D_i|)$ (note that these correspond to the cases that $|h(y)| > 2|A_kA_{k-1}\ldots A_1|$ when translating into the terms of Lemma~\ref{lem:systemofequations}). Then by Lemma~\ref{lem:systemofequations},   $h$ satisfies $\Phi_1$ and $\Phi_2$, if and only if there exist $u,v,u^\prime,v^\prime$, $i$,$j$, $i^\prime,j^\prime$, $k, k^\prime$ such that $\tilde y = B_i (uv)^k u $, ${\tilde y}^\prime = (u^\prime v^\prime)^{k^\prime} u^\prime A_{i^\prime}$ and $\tilde x = D_j (u^\prime v^\prime)^{k^\prime} u^\prime $, ${\tilde x}^\prime = (u v)^{k} u C_{j^\prime}$ where $|uv|, |u^\prime v^\prime|$ are bounded by $|E|$, and $uv$, $u^\prime v^\prime$ are primitive. Thus $h$ also satisfies $\Psi_4$ if and only if:
\begin{align*}
h(y_1 \alpha_1 y_2) &= h(x_2 \beta_1 x_3)\\
\implies  \tilde y C_1 \alpha_1 D_2 {\tilde y}^\prime &= \tilde x A_2 \beta_1 B_3 {\tilde x}^\prime\\
\implies B_i (uv)^k u C_1 \alpha_1 D_2 (u^\prime v^\prime)^{k^\prime} u^\prime A_{i^\prime} & = D_j (u^\prime v^\prime)^{k^\prime} u^\prime A_2 \beta_1 B_3 (u v)^{k} u C_{j^\prime}.
\end{align*}
As before, since $uv$, $u^\prime v^\prime$ are primitive, it is reasonably straightforward using standard arguments from combinatorics on words that if such a solution $h$ exists satisfying the above equation, then a solution exists for small $k$ and $k^\prime$, since many overlapping $uv$s or $u^\prime v^\prime$s again means that some repetitions may be removed and thus the solution is not minimal. Again all factors have length bounded by $|E|$, so $h$, and thus any minimal solution to $E$, is has length at most polynomial in $|E|$ and the statement of the theorem follows.
\end{proof}

\section*{$\npclass$-upper bounds for equations with regular constraints}

For a word equation $\alpha = \beta$ and an $x \in \var(\alpha \beta)$, a \emph{regular constraint} (\emph{for $x$}) is a regular language $L_x$. A solution $h$ for $\alpha = \beta$ \emph{satisfies} the regular constraint $L_x$ if $h(x) \in L_x$. The satisfiability problem for word equations with regular constraints is to decide on whether an equation $\alpha = \beta$ with regular constraints $L_{x}$, $x \in \var(\alpha \beta)$, given as an NFA, has a solution that satisfies all regular constraints. 

Let us first note that the satisfiability of regular-ordered equations with (general) regular constraints is $\pspaceclass$-complete follows from \cite{ManSchNow16}. In the following we consider the case of regular-ordered equations with regular constraints, when the regular constraints are regular languages that are all accepted by nondeterministic finite automata (NFA) with at most $c$ states, where $c$ is a constant (also called constant regular constraints). 

\begin{theorem}The satisfiability problem for regular-ordered equations with constant regular constraints is in NP.
\end{theorem}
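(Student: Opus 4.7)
The plan is to extend the sequence-based analysis underlying Proposition~\ref{prop:linearsolutions} and Lemma~\ref{lem:squares1} to the constrained setting, bound the length of constrained minimal solutions by $2^{O(n)}$, and conclude via an adaptation of Lemma~\ref{lem:shortsolutions} that also verifies the constant-size NFA constraints (this adaptation introduces only polynomial overhead in the guessing and verification stages). First, given any solution $h$ satisfying the constraints, I would fix an accepting run of each NFA $A_x$ (of at most $c$ states) on $h(x)$, obtaining state functions $q_x \colon \{0, 1, \ldots, |h(x)|\} \to Q_x$ labelling each potential cut position.

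Next, the main technical step is to extend Lemma~\ref{lem:squares1} to this setting. Rather than inferring non-minimality from a single square (two consecutive similar subsequences), the extension will require $c^n + 1$ similar consecutive subsequences of length $t$. For each of the (at most $n$) distinct variables $y$ involved in these subsequences, the $c^n + 1$ aligned cut positions in $h(y)$ carry an NFA state in $Q_y$, and the joint state tuple across all involved variables takes at most $c^n$ distinct values. By pigeonhole, two of the cut-point indices yield the same joint state tuple. Removing the aligned segments between these two cut points (exactly as in the proof of Lemma~\ref{lem:squares1}) preserves the equation by the original argument, and additionally preserves every regular constraint, since each NFA's accepting run can be spliced at the coinciding states.

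Following the argument of Proposition~\ref{prop:linearsolutions}, this extension implies that in any minimal constrained solution, each label $(x,z)$ appears at most $c^n$ times in any sequence $S_p$: more occurrences would, via Claim~1 of the proof of Lemma~\ref{lem:squares1}, induce enough similar consecutive subsequences to trigger the extended lemma. Since the regular-ordered structure ensures that there are at most $n$ sequences in total, this yields $|h(x)| = O(n \cdot c^n)$ for every variable~$x$, which is $2^{O(n)}$ for constant~$c$. Applying the constraint-aware variant of Lemma~\ref{lem:shortsolutions} then places satisfiability in~$\npclass$.

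The main obstacle will be the precise formalisation of similar consecutive subsequences when the underlying variables may repeat within a single subsequence, together with the careful verification that the pigeonhole-based simultaneous segment removal across several variables is consistent with both the equation's structure and each NFA's accepting run; the interaction between the sequence structure and the state structure of the constraint automata is the subtle combinatorial heart of the extension.
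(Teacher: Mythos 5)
There is a genuine gap, and it lies exactly where the constrained setting differs from the unconstrained one. Your bound $|h(x)| = O(n\cdot c^n)$ rests on the claim that ``the regular-ordered structure ensures that there are at most $n$ sequences in total.'' In the unconstrained case (Proposition~\ref{prop:linearsolutions}) the bound of $n$ on the number of relevant sequences does not come from the structure alone: it comes from minimality plus the fact that any chain of corresponding positions containing no terminal symbol can simply be erased, so in a minimal solution every position lies in a chain anchored at one of the at most $n$ terminal positions. With regular constraints this erasure argument fails: a minimal solution may need ``hidden'' factors inside variable images whose positions lie in chains that start and end at single-occurring variables and never meet a terminal (the paper calls these invisible sequences). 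The number of such chains is not bounded by $n$ a priori --- it is essentially $\sum_y |h(y)|$ over single-occurring variables $y$, which is exactly the quantity you are trying to bound. Your extended squares lemma only fires \emph{within} a single sequence, so it can say nothing about how many such sequences there are; for instance, two long aligned images of single-occurring variables produce a huge number of chains of length two, none of which contains a repetition, and your argument gives no bound at all in that situation. The paper's proof spends most of its effort precisely here: a pigeonhole over the constraint automata \emph{across} many similar invisible sequences shows that each similarity class contains $O(1)$ of them (Claim~2), and a separate ``splitting'' argument bounds the number of similarity classes by $O(n^3)$ (Claim~3); only Claim~1 (a per-automaton pigeonhole bounding repetitions of a variable inside one sequence by the constant number of DFA states, yielding $O(n)$ per sequence rather than your $c^n$) corresponds to the part you did address.

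Two further remarks. First, the per-sequence pigeonhole does not need the joint state tuple over all variables (your $c^n+1$ similar subsequences): cutting between two visits to the \emph{same} variable $x$ only requires the crossing states of the single automaton $A_x$ to coincide, so a constant number $O(2^c)$ of repetitions per variable per sequence suffices; this is what lets the paper reach a polynomial ($O(n^4)$) bound on minimal solutions, after which membership in $\npclass$ follows by explicitly guessing the variable images and checking the equation and the constraints --- no constraint-aware variant of Lemma~\ref{lem:shortsolutions} is needed. Second, even if your exponential bound were established, the appeal to an ``adaptation'' of Lemma~\ref{lem:shortsolutions} (i.e., of the $\ntime(n\log N)$ result for plain word equations) to equations with regular constraints is itself a nontrivial ingredient that you assert rather than prove.
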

\begin{proof}
We analyse regular-ordered equations $E: \alpha = \beta$ with regular constraints, such that for all $x\in \var(\alpha\beta)$ the language $L_x$ is accepted by an NFA with at most $c$ states (where $c$ is a constant). Let $n=|\alpha\beta|$. A trivial remark is that if the language $L_x$ is accepted by an NFA with at most $c$ states then it is accepted by a DFA (denoted $A_x$ in the following) with at most $2^c$ states, which is still a constant. 
For simplicity, let $C=2^{c}+3$. Let $K$ be the number of DFAs with input alphabet $\Sigma$ and at most $C$ states; it is immediate that $K$ is constant (although exponential in $C$, so doubly exponential in $c$). Also, let $A_1,\ldots, A_K$ be an enumeration of the DFAs with at most $C$ states. 

In the following we show that the minimal solution to a regular-ordered equation $E: \alpha = \beta$ with regular constraints as above has length $O(n^4)$, with the constant hidden by the $O$-notation being exponential in $K$. 

We will use in the following the same notations as in Proposition \ref{prop:linearsolutions}. Let $h$ be a minimal solution to $E$ and let $H=h(\alpha)=h(\beta)$. 

Firstly, we note that in the minimal solution $h$ to $E$, unlike the case of equations without regular constraints, it is not necessary that every position of $h$ occurs somewhere in one of the sequences $S_p$ that start or end with a terminal symbol of the equation. Now, because of the regular constraints, we might need some "hidden" factors inside the images of the variables, whose symbols do not belong to any sequence starting or ending with a terminal symbol; such factors ensure that the variable-image to which they belong satisfies its regular constraint. It is straightforward to note that the sequences that contain symbols of these hidden factors start with a single occurring variable and end with a single occurring variable. Therefore, they will belong to so-called {\em invisible sequences}. We define the invisible sequences as follows. 

For each position $p$ associated with a variable occurring only once in $\alpha\beta$, we construct a sequence $S_p = p_1,p_2,\ldots$ (called invisible sequence) such that
\begin{itemize}
\item{}$p_1 = p$ and $p_2$ is the (unique) position corresponding with $p_1$, and 
\item{}for $i \geq 2$, if $p_i = (x,z,d)$ such that $x$ occurs only once in $\alpha\beta$, then the sequence terminates, and
\item{}for $i \geq 2$, if $p_i = (x,z,d)$,  such that $x$ is a variable occurring twice, then $p_{i+1}$ is the position corresponding to the (unique) position $(x,z^\prime,d)$ with $z^\prime \not= z$ (i.e., the `other' occurrence of the $i^{th}$ letter in $h(x)$). 
\end{itemize} 
Moreover, we can talk about {\em similarity classes} of invisible sequences: all similar invisible sequences are grouped in the same similarity class. For simplicity, let us assume that invisible sequences always start with the leftmost of the single occurring variables between which it extends (i.e., the variable whose image in the minimal solution $h$ has its first symbol closer to beginning of $H$).

Our proof is based on three claims regarding the structure of a minimal solution $h$ of $E$:
\begin{itemize}
\item[1.] We first show that each sequence (regular or invisible) contains $O(n)$ elements (where the constant hidden by the $O$-notation is linear in $C$).
\item[2.] The number of invisible sequences similar to a given sequence is $O(1)$ (where the constant hidden by the $O$-notation is proportional to $(C+1)^{CK}$). 
\item[3.] The number of similarity classes of invisible sequence is $O(n^3)$ (where the constant hidden by the $O$-notation is linear in $C$).  
\end{itemize}

To prove Claim 1 from above we use the same general strategy as in Proposition \ref{prop:linearsolutions}. 

Let $S_p$ be any sequence (regular or invisible). 
Due to the structure of the equations, $S_p$ cannot contain subsequences $\ldots, (x,z_1,d_1), \ldots, (y,z_2,d_2), \ldots, (x,z_3,d_3), \ldots$. Moreover, it is important to note that if $S_p$ contains a subsequence $(x,z_1,d_1),\ldots, (x,z_i,d_{i})$ then $d_1<d_2<\ldots <d_i$.

Let us assume that $S_p$ contains a subsequence $(x,z_1,d_1),\ldots, (x,z_C,d_{C})$. Let $q_i$ be the state in which the automaton $A_x$ enters after reading the word $h(x)[1..d_i-1]$. It is immediate that by the form of the equation $z_i=z_j$ for all $i,j$. By Lemma \ref{lem:squares1} (and its proof) we get that $h(x)[d_i..d_{i+1}-1]=h(x)[d_{i+1}..d_{i}-1]=v$ for all $i< C$; let $p=|h(x)[d_i..d_{i+1}-1]|=d_{i+1}-d_i$. 

As $C$ is strictly greater than the number of states of $A_x$ we get that there exists $i^\prime$ and $i^{\prime\prime}$, with $1<i^\prime<i^{\prime\prime}<C$ such that $q_{i^\prime}=q_{i^{\prime\prime}}$. Also, we have $h(x)[d_{i^\prime}..d_{i^{\prime\prime}}-1]$ corresponds to the factor $H[j..j^\prime]$ of $H=h(\alpha)$ but also to the factor $H[j+p..j^\prime+p]$ of $H=h(\beta)$; in both cases, these factors are both succeeded and followed by another $v$, which occur completely inside $h(x)$. By a reasoning similar to Lemma \ref{lem:squares1} we immediately get that we can obtain a shorter solution of our equation by removing the factor $h(x)[d_{i^\prime}..d_{i^{\prime\prime}}-1]$ from the image of $x$, noting that since $q_{i^\prime}=q_{i^{\prime\prime}}$, $h(x)$ still satisfies the regular constraint.
\begin{figure}[ht]
\vspace*{-0.3cm}
	\centering
  \includegraphics[width=\textwidth ]{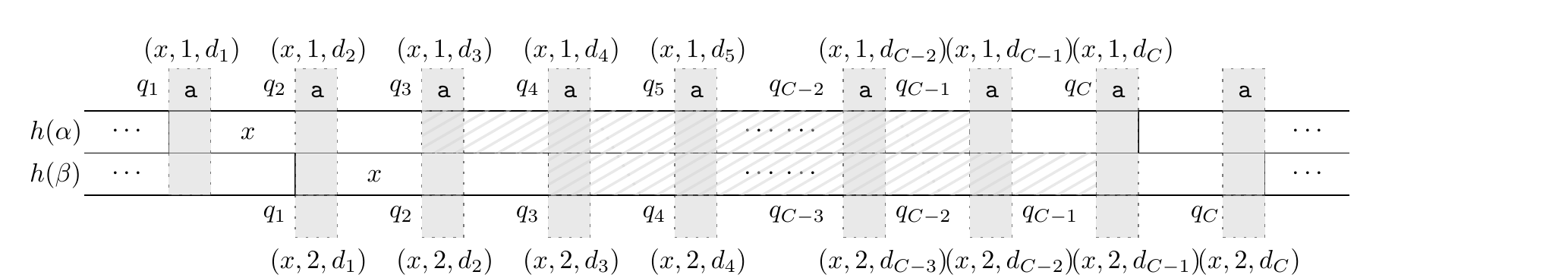}
	\caption{A subsequence $\ldots, (x,1,d_1), (x,1,d_2),(x,1,d_3),\ldots , (x,1,d_{C-1}), (x,1,d_{C}), \ldots$ in a solution of the equation, defined by $h$. If $q_3=q_{C-1}$, as $h(x)[d_2..d_3-1]=h(x)[d_{C-1}..d_C-1]$, then $h(x)[d_3..d_{C-1}-1]$ (shaded in the figure) can be removed from the factors $h(x)$ occurring on both sides to obtain a shorter solution of the equation.}	
	\label{fig4}
\vspace*{-0.3cm}
\end{figure}

Therefore, the number of times a sequence $S_p$ (which is part of the minimal solution of $E$) can contain a triple with $x$ on the first position is strictly smaller than $C$. In conclusion, the length of each sequence $S_p$ in the minimal solution is upper bounded by $Cn$. This concludes the proof of Claim 1. The next claims help us upper bound the total number of sequences.

Let us now move on and show Claim 2. Let $M=(C+1)^{CK}+1$. Let us assume that there exist $S_1, S_2,\ldots , S_M$ similar invisible sequences. Let us assume that $S_i$ starts with $(x,1,d_i)$ for all $i\leq M$, with $d_1<d_2<\ldots<d_M$. It is not hard to note that if we consider the element $(y,z,d^\prime_i)$ occurring in each sequence $S_i$ with $1\leq i\leq M$ on position $\ell$ (same for all sequences), then $h(y)[d^\prime_i..d^\prime_j]=h(x)[d_i..d_j]$ for all $1\leq i<j\leq M$. As the number of regular languages that may define the regular constraints used in our equation is constant, it follows that it may be the case that the regular constraints associated to different variables traversed by our sequences $S_i$ are actually the same. Now, let $(y_1,z^i_1,e^i_1),\ldots,(y_t,z^i_t,e^i_t)$ be all the elements of the sequence $S_i$ whose variables $y_j$ is subject to the regular constraint accepted by $A_1$, in the order they appear in this sequence. It is worth noting that the order of these variables and their relative position is exactly the same in all sequences, because these sequences are similar. 
Let now $q^1_j$ be the state in which $A_1$ enters after reading $h(y_j)[1..e^1_j-1]$; this state determines uniquely the state $q^t_j$ in which $A_1$ enters after reading $h(y_j)[1..e^t_j-1]$ for $2\leq t\leq M$. There are only $C$ possibilities to choose the beginning state $q^1_j$; so, if we consider $s_1,\ldots,s_C$ an enumeration of the states of $A_1$, we will consider, in order, the cases of $q^1_j$ being each of these states. For $i\leq C$, let $J_{s_i}=\{j\mid q^1_j=s_i\}$; clearly $q^t_{j_1}=q^t_{j_2}$ for all $1\leq t\leq M$ and $j_1,j_2\in J_{s_1}$. Now, for some $j\in J_{s_1}$ we have that the states $q^t_j$ with $1\leq t\leq M$ can take at most $C$ different values, so there exists a subset $M^{(0)}$ of $\{1,\ldots,M\}$ with at least $M/C$ elements such that all states $q^t_{j}$ with $t\in M^{(0)}$ are all equal to a state $s^{(0)}$ of $A_1$.  Further, we consider $j\in J_{s_2}$ and have again that the states $q^t_j$  with $ t\in M^{(0)}$ (which are, again, identical for all values $j\in J_{s_2}$) can take at most $C$ different values, so at least $|M^{(0)}|/C$ of them should be identical. Then, we take the subset $M^{(1)}\subseteq M^{(0)}$ with least $|M^{(0)}|/C$ elements such that all states $q^t_{j}$ with $t\in M^{(1)}$ are all equal to a state $s^{(1)}$ of $A_1$. We keep repeating this procedure until we finished considering the case $q^1_j=s_C$ and obtained a set $M^{(C)}$. It follows immediately that there is a subset $M_1=\{S_i\mid i\in M^{(C)}\}$ of $\{S_1, \ldots, S_M\}$ of size at least $M/C^C\geq (C+1)^{C(K-1)}+1$ such that, for each $j$, we have that $A_1$ enters the same state after reading $h(y_j)[1..e^t_j-1]$ for all $t\in M_1$. We then repeat the same reasoning with $M_1$ in the role of $S_1,\ldots,S_M$ and $A_2$ instead of $A_1$ to produce an even smaller set $M_2$, of size at least $|M_1|/C^C\geq (C+1)^{C(K-2)}+1$. Further we repeat this procedure for each $A_i$, with $1\leq i\leq K$, and the set $M_{i-1}$ produced in the previous step. In the end we reach a subset $M_K$ of $\{S_1,\ldots ,S_M\}$, with at least $2$ elements $m_1<m_2$, such that for all elements $(y,z,d_{m_1})$ and $(y,z,d_{m_2})$ occurring on the same position in the sequences $S_{m_1}$ and $S_{m_2}$, respectively, we have that the automaton $A_y$ (accepting the variable $y$) enters after reading $h(y)[1..d_{m_1}-1]$ the same state as the state it enters after reading $h(y)[1..d_{m_2}-1]$. It is not hard to see that, in this case, we can remove from all the images of the variables found on our similar sequences, respectively, the factor corresponding to $h(y)[d_{m_1}..d_{m_2}-1]$, and get a shorter solution to our equation that still fulfils the regular constraints. This is a contradiction to the minimality of $h$, so, in conclusion, we cannot have $M$ similar invisible sequences. This concludes the proof of Claim~2.

We finally show Claim 3. We say that two invisible sequences $S_1$ and $S_2$ split if there exists $\ell$ such that the $i^{th}$ elements of $S_1$ and $S_2$ are similar, for all $i<\ell$, and the $\ell^{th}$ elements of $S_1$ and $S_2$ are not similar. Let us now consider only the invisible sequences starting with a variable $x$ (single occurring). Due to the particular form of the equations, it is clear that if $S_1$ and $S_2$ are two such invisible sequences which are not similar and $S_1$ starts to the left of $S_2$ (w.r.t. the solution word $h(\alpha)$) and no other invisible sequence starting between them, then $S_1$ cannot be similar to any invisible sequence starting on a position of $h(x)$ to the right of the starting position of $S_2$. So, essentially, the similarity classes of invisible sequences starting with $x$ can be bijectively associated to the splits between consecutive invisible sequences. 

So, let us consider two such consecutive sequences $S_1$ and $S_2$, with $S_1$ starting to the left of $S_2$ and no other invisible sequence starting between them. Assume that $S_1$ and $S_2$ are the first to split among all pairs of consecutive invisible sequences starting in $x$; more precisely, assume that they split after $\ell$ elements of the sequence (and they belong to different similarity classes), and all other pairs of invisible sequences split after at least $\ell$ elements. This split occurs because $S_1$ reaches a triple $(y,z,|h(y)|)$ and $S_2$ a triple $(u,z,1)$ with $y$ and $u$ consecutive variables in $\alpha$ or $\beta$ (maybe with terminals between them). For simplicity, we say $S_1$ and $S_2$ are split by $y$ and $u$, and also note that no other pair of consecutive invisible sequences can be split exactly after their first $\ell$ positions by $y$ and $u$, due to the regular ordered form of the solutions. Moreover, it is not hard to see that any two consecutive sequences to the right of $S_2$ can not be split by $y$ and $u$; otherwise there will be a sequence leading from a symbol of $h(u)$ (reached by a sequence on position $\ell$), other than the first one, to the first symbol of $u$ (reached by that sequence when the second split happens), a contradiction. To the left of $S_1$ there still might be pairs of consecutive sequences that are split by $y$ and $u$, but all those sequences must only contain triples with the first component $y$ after the $\ell^{th}$ position, until the split (as they already reached $y$ and the variables cannot alternate in sequences, due to the form of the equations). As each sequence contains at most $Cn$ elements with the same variable, and as there cannot be two distinct pairs of consecutive sequences split by $y$ and $u$ after the same number of elements, we might have at most $Cn$ pairs of consecutive sequences split by $y$ and $u$. Now, as splits can only be caused by variables occurring consecutively in $\alpha$ and $\beta$, we consider each such pair of variables and note that each can split up to $Cn$ consecutive sequences starting in $x$. So, as the number of possible similarity classes is upper bounded by twice the number of splitting points multiplied by $Cn$. We get that the number of classes of similar sequences starting in $x$ is $O(Cn^2)$. The conclusion of Claim 3 follows immediately. 

From our three claims we get immediately that the size of the minimal solution of $E$, which is proportional to the total length of the sequences, is $O(n^4)$ (where the constant is proportional to in $(C+1)^{CK}$). 
It now follows immediately that solving regular-ordered equations with regular constraints accepted by NFAs with at most $c$ states is in $\npclass$. We just have to guess the images of all variables $x\in \var(\alpha\beta)$ and then check whether they are in the respective languages $L_x$ and also whether they satisfy the input equation. \end{proof}

\begin{theorem}The satisfiability problem for regular-ordered equations whose sides contain exactly the same variables, with (unrestricted) regular constraints, is in $\npclass$.
\end{theorem}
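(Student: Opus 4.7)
My plan is to adapt the proof of the preceding theorem (for constant-size regular constraints) to arbitrary NFAs, exploiting the particularly rigid form of regular-ordered equations with the same variables on both sides. Such an equation necessarily has the shape $\alpha = u_0 x_1 u_1 \cdots x_n u_n$ and $\beta = v_0 x_1 v_1 \cdots x_n v_n$ for constants $u_j,v_j$ and distinct variables $x_1,\dots,x_n$; in particular every variable occurs exactly twice (once per side), so there are no single-occurring variables. Setting $\Delta_i := \sum_{j<i}(|u_j|-|v_j|)$, the LHS occurrence of $h(x_i)$ is shifted by $\Delta_i$ relative to its RHS occurrence in the solution word. Consequently, in the position/sequence machinery of Section~4, every position falls on either a terminal-anchored sequence (starting and ending at positions inside the constants $u_j$ or $v_j$) or a pure cycle lying entirely inside variable images.

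The central technical step is the following length bound for a minimal solution $h$: every terminal-anchored sequence has length at most $n \cdot 2^{O(c)}$, and every cycle has length at most $2^{O(c)}$, where $c := \sum_i c_i$ and $c_i$ is the number of states of the NFA $A_i$ accepting $L_{x_i}$. The argument is the natural extension of Claim~1 in the proof of the previous theorem: if a sequence visits $x$ at offsets $d_1 < d_2 < \cdots$, then at each visit, each of the variables spanned by the ``square'' (in the sense of Lemma~\ref{lem:squares1}) has its own NFA in a specific state-set configuration. The joint configuration space has size at most $\prod_i 2^{c_i} = 2^{O(c)}$, so if the sequence is too long two visits must share the same joint configuration. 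Lemma~\ref{lem:squares1} then produces a strictly shorter word satisfying the equation, and the agreement of the joint NFA configurations guarantees that every regular constraint remains satisfied, contradicting minimality. The same pigeonhole argument, combined with the ideas used for bounding similarity classes in the previous theorem's Claim~3, shows that the total contribution of cycles is also $2^{\mathrm{poly}(N)}$, where $N$ denotes the total input length. Altogether, the minimal solution has length $2^{\mathrm{poly}(N)}$.

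To finish, I would invoke Plandowski's framework extended to regular constraints: non-deterministically guess a polynomial-size straight-line program (SLP) representing each $h(x_i)$, verify the equation in polynomial time via SLP-equality, and verify each regular constraint in polynomial time using the classical SLP-to-NFA membership algorithm. This places the satisfiability problem in $\npclass$. The main technical obstacle is carrying out the pigeonhole-plus-Lemma~\ref{lem:squares1} step \emph{uniformly} across all variables appearing on a square: the factor to be removed must be a valid ``period'' of every such variable simultaneously, and must preserve every relevant NFA configuration. The restriction that both sides contain exactly the same variables is crucial here: it forces squares and cycles to have a clean cyclic form across well-identified variables, avoiding the more delicate asymmetries that arise when variables occur on only one side.
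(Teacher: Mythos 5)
Your structural observations (the shape of the equation, the constant shifts $\Delta_i$, the absence of single-occurring variables) match the paper's, but the two places where your plan diverges are exactly where it has gaps. First, the paper does not settle for an exponential length bound followed by compression: it shows that a minimal solution has length less than $(m+2)n^2$, polynomial in the equation length $n$ and the number $m$ of NFA states, after which membership in $\npclass$ is immediate by guessing the images of the variables explicitly and checking the equation and the NFA memberships directly. Your route ends with ``invoke Plandowski's framework extended to regular constraints,'' but Lemma~\ref{lem:shortsolutions} (and the $\ntime(n\log N)$ algorithm behind it) is stated for equations \emph{without} constraints, so an exponential bound alone gives you nothing citable here; the SLP-guessing plus compressed NFA-membership machinery you appeal to is precisely the nontrivial layer you would still have to supply, and it is unnecessary. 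Second, your pigeonhole ranges over joint subset-construction configurations of \emph{all} the NFAs, of size $\prod_i 2^{c_i}$, because you worry that the excision produced by Lemma~\ref{lem:squares1} may cut through several variables at once and must preserve all of their constraints simultaneously --- and you explicitly leave this ``main technical obstacle'' unresolved. The paper resolves it, and in doing so avoids both the determinization and the product: for regular-ordered equations two visits of a sequence to the same variable $x$ are necessarily consecutive (variables cannot interleave along a sequence, exactly as in Proposition~\ref{prop:linearsolutions} and Claim~1 of the preceding theorem), so the square lies entirely inside $h(x)$, removing one period changes only $h(x)$, and it suffices to pigeonhole over the $m+1$ states visited by one fixed accepting run of the single NFA for $x$ on $h(x)$. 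This gives at most $m+2$ visits per variable per sequence, hence sequences of length at most $(m+2)n$, and at most $n$ terminal-anchored sequences.

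Two further points. Your treatment of positions not reached from a terminal (your ``pure cycles'') leans on Claim~3 of the preceding theorem, but that claim concerns invisible sequences anchored at single-occurring variables, which do not exist in the present class, so it does not transfer as stated; you would need a separate argument there. The paper instead distinguishes, for each variable $x$, whether the two occurrences of $h(x)$ in the solution word overlap: if they do not, then $|h(x)|$ is bounded by the difference of the lengths of the two prefixes preceding them, and since both sides contain exactly the same variables this difference is bounded by the constants alone, so $|h(x)| < n$ with no automaton argument at all; if they do overlap, the consecutive-visit pigeonhole above applies. This case distinction, absent from your proposal, is what converts the same-variables hypothesis into a polynomial bound and makes the entire compression layer, and the exponential joint-configuration bookkeeping, superfluous.
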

\begin{proof}
We will use the same notations as in the previous proof. We analyse regular-ordered equations $E: \alpha = \beta$ with regular constraints, such that $\var(\alpha)=\var(\beta)$ and for all $x\in \var(\alpha\beta)$ the language $L_x$ is accepted by an NFA with at most $m$ states (here $m$ is not a constant anymore), has length $O(n)$. Let $n=|\alpha\beta|$. 

In the following we show that the minimal solution $h$ to a regular-ordered equation $E: \alpha = \beta$ with regular constraints as above has length polynomial in $n$. Let $H=h(\alpha)=h(\beta)$.
Due to the particular form of these equations, there are no single occurring variables. So, when we analyse the sequences $S_p$ of equivalent positions defined by the minimal solution $h$, each of them starts and ends with a terminal symbol (so there are at most $n$ sequences).

Essentially, for each variable $x\in \var(\alpha)$, $h(x)$ has two (not necessarily distinct) occurrences in $H$, induced by the occurrence of $x$ in each side of the equation. These occurrences can either be overlapping or non-overlapping. In the first case, the overlap of the two occurrences of $h(x)$ will lead to sequences that contain subsequences $(x,z_1,d_1),\ldots, (x,z_i,d_{i})$ for some $i>1$. In the second case, there will be in each sequence at most one triple that contains the variable $x$. Moreover, in this case $|h(x)|$ is at most equal to the difference between the length of the string occurring in $H$ before the rightmost occurrence of $h(x)$ and the length of the string occurring in $H$ before the leftmost occurrence of $h(x)$; as these two strings contain exactly the same images of variables, this difference is upper bounded by the difference between the total length of the two sides of the equations, so smaller than $n$. In conclusion, variables whose images are non-overlapping in $H$ have length at most $n$. 

With a proof that follows exactly the lines of the proof of Claim 1 from the previous proof, one can show that if a sequence contains a subsequence $(x,z_1,d_1),\ldots, (x,z_i,d_{i})$ for some $i>1$ then $i$ is upper bounded by $m+2$ (in the respective proof it is enough to use any accepting computation for $h(x)$, not necessarily a deterministic one). This also leads to an upper bound of $(m+2)n$ for the length of any sequence (again the occurrences of different variables cannot be interleaved in a sequence). 

Adding these up, we get that $|H|<(m+2)n^2$, so the length of the image of each variable in the minimal solution of $E$ is polynomial. It follows immediately that our statement holds.
\end{proof}

\end{document}